\definecolor{darkgreen}{rgb}{0.0,0,0.9}
\numberwithin{equation}{section}
\newtheorem{theorem}{Theorem}[section]
\newtheorem{lemma}{Lemma}[section]
\newtheorem{corollary}[theorem]{Corollary}
\date{}
\title{On Guarding Orthogonal Polygons with Sliding Cameras\footnote{Research of TB and TC supported by NSERC. Research of FM supported in part by the MIUR project AMANDA, prot. 2012C4E3KT\_001.  Research was done while FM was visiting the University of Waterloo.}}
\author[1]{Therese Biedl}
\author[1]{Timothy M. Chan}
\author[1]{Stephanie Lee}
\author[1]{Saeed Mehrabi}
\author[2]{Fabrizio Montecchiani}
\author[1]{Hamideh Vosoughpour}
\affil[1]{{\small David R. Cheriton School of Computer Science, University of Waterloo, Waterloo, Canada.
	\texttt{\{biedl, tmchan\}@cs.uwaterloo.ca, \{s363lee, smehrabi, hvosough\}@uwaterloo.ca}}}
\affil[2]{{\small Department of Engineering, University of Perugia, Perugia, Italy.
	\texttt{fabrizio.montecchiani@unipg.it}}}
\begin{document}

\maketitle

\begin{abstract}
A sliding camera inside an orthogonal polygon $P$ is a point guard that travels back and forth along an orthogonal line segment $\gamma$ in $P$. The sliding camera $g$ can see a point $p$ in $P$ if the perpendicular from $p$ onto $\gamma$ is inside $P$. 
In this paper, we give the first constant-factor approximation algorithm for the problem of guarding $P$ with the minimum number of sliding cameras.   Next, we show that the sliding guards problem is linear-time solvable if the (suitably defined) dual graph of the polygon has bounded treewidth.  Finally, we study art gallery theorems for sliding cameras, thus, give upper and lower bounds in terms of the number of guards needed relative to the number of vertices $n$.
\end{abstract}

\section{Introduction}
\label{sec:intro}
Let $P$ be a (not necessarily orthogonal) polygon with $n$ vertices. The art gallery problem, posed by Victor Klee in 1973~\cite{orourke1982}, asks for the minimum number of point guards required to guard $P$, where a point guard $g$ sees a point $p\in P$ if the line segment connecting $g$ to $p$ lies inside $P$. Chv\'{a}tal~\cite{chvatal1975} was the first to answer the question by giving the tight bound $\lfloor n/3 \rfloor$ on the number of point guards that are needed to guard a simple polygon with $n$ vertices. For polygons with holes, Hoffmann et al.~\cite{hoffmannFOCS1991} proved that $\lfloor (n+h)/3 \rfloor$ point guards are always sufficient and occasionally necessary, where $h$ is the number of holes. For \emph{orthogonal polygons}, it was proved multiple times \cite{jeff1983, Lubiw1985, orourke1982} that $\lfloor n/4 \rfloor$ point guards are always sufficient and sometimes necessary to guard the interior of a simple orthogonal polygon with $n$ vertices. 

Finding the minimum number of guards
is \textsc{NP}-hard on simple polygons
\cite{lee1986}, even on simple orthogonal polygons \cite{dietmar1995} 
or monotone polygons~\cite{krohn2013}.  A number of results concerning
approximation algorithms are also known
\cite{eidenbenz2001, Kirkpatrick2015, krohn2013}. 

\noindent\paragraph{Mobile Guards and Sliding Cameras.} A \emph{mobile guard} is a point guard that travels along a line segment $\gamma$ inside $P$. Guard $\gamma$ can see a point $p$ in $P$ if and only if there exists a point $g\in \gamma$ such that the line segment $pg$ lies entirely inside $P$. If the line segment $\gamma$ must be orthogonal, then we call it an \emph{orthogonal mobile guard}. Moreover, if the line segment $pg$ is required to be perpendicular to $\gamma$, then we call $\gamma$ a \emph{sliding camera}. Note that an orthogonal mobile guard travelling along $\gamma$ may see a larger area of $P$ than a sliding camera travelling along $\gamma$, see also Figure~\ref{fig:pWithHoles}(a).
\begin{figure}[t]
\centering
\hspace*{\fill}
\includegraphics[width=.40\textwidth]{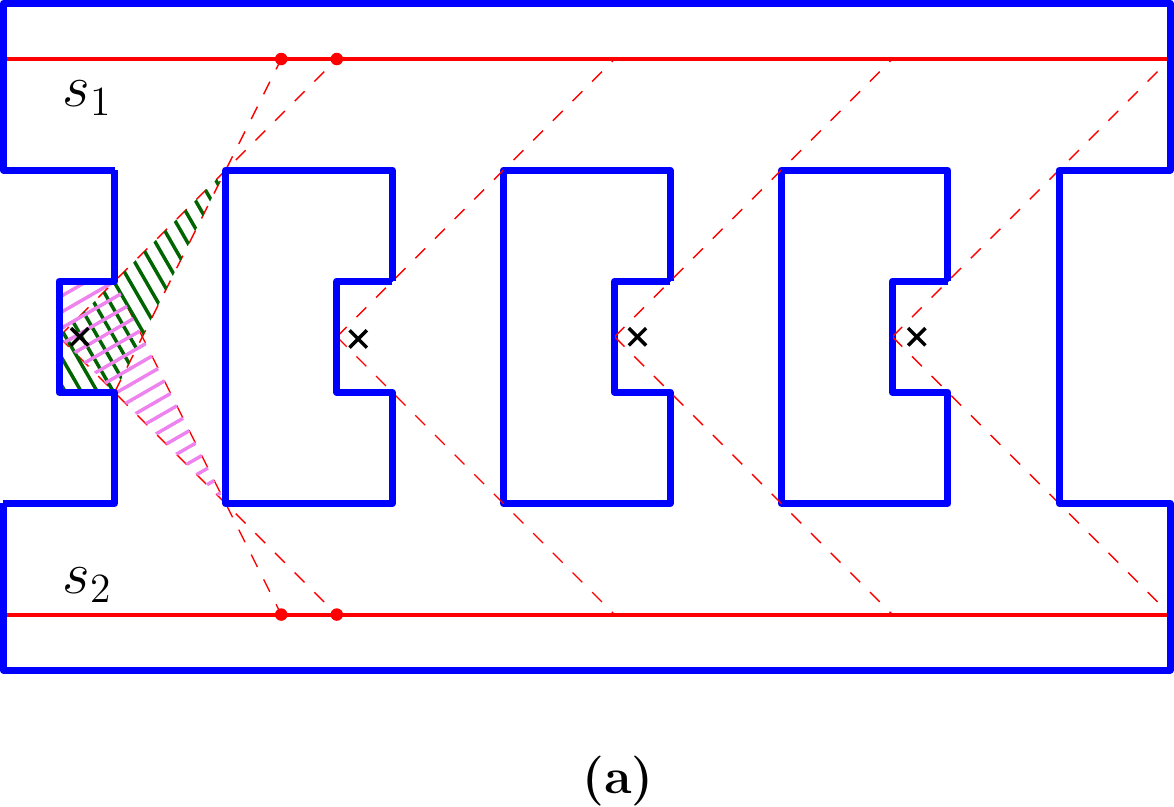}
\hspace*{\fill}
\includegraphics[width=.40\textwidth]{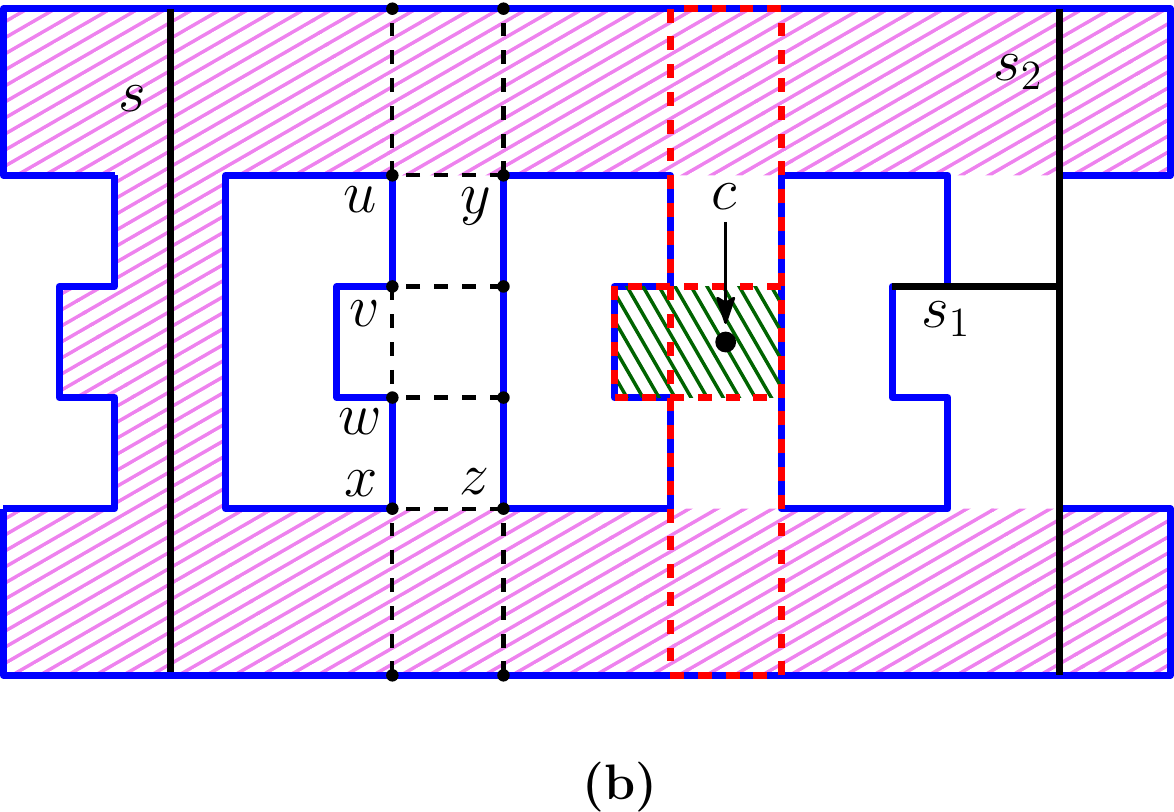}
\hspace*{\fill}
\caption{(a) An orthogonal polygon $P$ that can be guarded with two orthogonal mobile guards, but requires $\Theta(n)$ sliding cameras to be guarded since no two crosses can be seen by one sliding camera. (b) Sliding camera $s$ sees the rising-shaded subpolygon of $P$.  We also show parts of the pixelation induced by rays from reflex vertices $\{u, v, w, x, y, z\}$, and the cross $c$ whose supporting horizontal slices is downward shaded. Segments $s_1$ and $s_2$ are guard-segments.}
\label{fig:pWithHoles}%
\end{figure}
The notion of mobile guards was introduced by Avis and Toussaint~\cite{avis1981}. O'Rourke~\cite{orourke1983} proved that $\lfloor n/4\rfloor$ (not necessarily orthogonal) mobile guards are sufficient for guarding arbitrary polygons with $n$ vertices. For orthogonal polygons with $n$ vertices, $\lfloor (3n+4)/16\rfloor$ mobile guards are always sufficient and sometimes necessary~\cite{aggarwal1984}.

In this paper we study the \emph{Minimum Sliding Cameras (MSC)} problem, i.e., we want to guard an orthogonal polygon $P$ with the minimum number of sliding cameras.  We also consider the variant \emph{Minimum Horizontal Sliding Cameras (MHSC)} where only horizontal cameras are allowed.
These problems were introduced by Katz and Morgenstern~\cite{katz2008}, who proved that MHSC can be solved in polynomial time in the special case where the polygon is simple (has no holes).  It was shown later that MSC is NP-hard in polygons with holes \cite{durocher2013}; NP-hardness in simple polygons is open.  Durocher et al.~\cite{durocherLATIN2014} claimed a (3.5)-approximation algorithm for MSC problem on simple orthogonal polygons, but this was later discovered by the authors to be incorrect (private communication). For the special case of monotone orthogonal polygons, Katz and Morgenstern~\cite{katz2008} gave a 2-approximation algorithm, which was later improved by De Berg et al.~\cite{deBergCOCOA2014} to a linear-time exact algorithm.

\noindent\paragraph{Our Results.} In this paper, we give hardness results and algorithms for both MSC and MHSC.    
Specifically, we give two (conceptually very different) algorithms.  The first works by constructing a small $\varepsilon$-net for the hitting set problem that naturally arises from MSC.  This gives then an $O(1)$-approximation algorithm for the MSC problem on orthogonal polygons. Note that no constant-factor approximation algorithm was known previously,
and, as opposed to previous attempts at such approximation-algorithms \cite{durocherLATIN2014}, our algorithm works even on orthogonal polygons with holes.  
The second algorithm uses a tree-decomposition approach.  We show that if the dual graph of the so-called pixelation of the polygon has bounded treewidth,
then MSC can be solved in polynomial time.  In particular, MSC becomes polynomial in so-called thin polygons that have no holes.

Both the above approaches also work (and become even simpler) for MHSC where only horizontal cameras are allowed.
We also establish NP-hardness of MHSC for polygons with holes.  The same proof also works for MSC and is different, and perhaps simpler,
than the previous NP-hardness proof for MSC~\cite{durocher2013}.

Finally, we consider art gallery theorems for sliding cameras, i.e.,  theorems that bound the number of guards relative to the number of vertices.  We present the following results for an orthogonal polygon $P$ with $n$ vertices: \begin{inparaenum}[(i)] \item $\lfloor (3n+4)/16 \rfloor$ sliding cameras are always sufficient and sometimes necessary to guard $P$ entirely, \item if the dual graph induced by the vertical decomposition of $P$ is a path, then $\lfloor (n+2)/6\rfloor$ sliding cameras are always sufficient to guard $P$, and \item $\lfloor n/4\rfloor$ horizontal sliding cameras are always sufficient and sometimes necessary to guard $P$ entirely. \end{inparaenum}

\section{Preliminaries}
\label{sec:prelim}
Throughout the paper, $P$ denotes an orthogonal polygon with $n$ vertices.
The {\em horizontal} (respectively {\em vertical}) {\em segmentation} of
$P$ consists of extending a horizontal (vertical) ray inward from any
reflex vertex of $P$ until it hits another vertex or edge.  The rectangles
in the resulting partition of $P$ are called the horizontal (vertical)
{\em slices} of $P$.  Each slice can be represented by the horizontal
(vertical) line segment that halves the slice;
we call these the {\em slice-segments} and denote them by $\Sigma$.

The {\em pixelation} of $P$ is obtained by doing both the horizontal and
the vertical segmentation of $P$.  The resulting rectangles are called
{\em pixels}. 
The pixelation may well have $\Theta(n^2)$ pixels.
Notice that the pixels are in 1-1-correspondence with pairs of slices that 
cross.  We can hence identify each pixel with
a {\em cross} $c$, which is the point where the two slice-segments 
$\sigma_H$ and $\sigma_V$ of
these two slices cross.  We say that $\sigma_H$ and $\sigma_V$ {\em support}
$c$.  Denote the set of crosses by $X$.

A {\em sliding camera} $\gamma$ is a horizontal or vertical line segment that 
is inside $P$.  (We will frequently omit ``sliding'', as we study no other
type of camera.)  The region visible from $\gamma$ is the set of all points $p$
such that that perpendicular from $p$ to $\gamma$ is inside $P$.
Note that doing a {\em parallel shift} (i.e., translating a horizontal camera 
vertically or a vertical camera horizontally)
does not change its visibility region for as long as we stay inside $P$. We
may hence assume that any camera runs along pixel-edges.
We may also restrict our attention to cameras that are maximal line segments 
within $P$ (all others would see a subset).  Let $\Gamma$
be the set of {\em guard-segments} which are maximal horizontal and vertical
line  segments within $P$ that run along pixel edges.
See Figure~\ref{fig:pWithHoles}(b) for an illustration.

The following lemma is a simple re-formulation of what guarding means,
but casts the problem into a discrete framework that will be crucial later.

\begin{lemma}
\label{lem:hitting}
A set $S$ of $k$ sliding cameras guards polygon $P$ if and only if
there exists a set of $k$ guard-segments 
$S'\subseteq \Gamma$ such that for every cross $c\in X$, at least one of the slice-segments that support $c$ is intersected by some $\gamma\in S'$.
\end{lemma}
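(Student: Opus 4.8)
**The plan is to prove both directions of the equivalence by translating "visibility" into the combinatorial language of slices and crosses.**

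First I would set up the dictionary. A sliding camera $\gamma$ sees a point $p$ iff the perpendicular foot from $p$ onto $\gamma$ lies in $P$; equivalently, $p$ lies in some slice (say a horizontal slice, if $\gamma$ is vertical) whose slice-segment $\sigma$ is crossed by $\gamma$. Indeed, the maximal axis-parallel rectangle of $P$ through $p$ perpendicular to $\gamma$ is exactly a slice of the segmentation orthogonal to $\gamma$, and $\gamma$ reaches into that slice iff $\gamma$ intersects its slice-segment $\sigma$ (here I use that, after a parallel shift, $\gamma$ runs along pixel edges and is maximal, so it spans the full extent of any slice it enters). Hence: $\gamma$ sees all of a slice iff $\gamma$ intersects that slice's slice-segment, and $\gamma$ sees nothing outside the union of such slices. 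Since every point of $P$ lies in some pixel, and every pixel is the intersection of a horizontal slice and a vertical slice, guarding all of $P$ is equivalent to: for every pixel — equivalently, for every cross $c\in X$ — the slice-segment of $c$'s horizontal slice \emph{or} the slice-segment of $c$'s vertical slice is hit by some camera.

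Next I would argue we may restrict cameras to guard-segments without loss. Given an arbitrary set $S$ of $k$ cameras, parallel-shift each to run along pixel edges (this does not change its visibility region by the remark in the preliminaries), then extend each to the maximal segment within $P$ along pixel edges containing it; the extended segment sees a superset, so it still guards $P$, and now $S'\subseteq\Gamma$ with $|S'|=|S|=k$. Conversely guard-segments are themselves cameras, so the reverse substitution is free. This reduces the "if and only if" to the combinatorial statement about hitting slice-segments, which is precisely what the previous paragraph established: $S'$ guards $P$ iff every cross has one of its two supporting slice-segments intersected by some $\gamma\in S'$.

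**The main obstacle** I expect is the careful verification that "$\gamma$ intersects the slice-segment $\sigma$" is exactly equivalent to "$\gamma$ sees the entire slice corresponding to $\sigma$," and in particular that a camera cannot see a point $p$ unless it hits the slice-segment of $p$'s perpendicular slice. The subtlety is that a slice is a maximal rectangle only in the \emph{orthogonal} segmentation; one must check that the perpendicular foot being inside $P$ forces $\gamma$ to traverse the full width of that rectangle, using maximality of $\gamma$ and the fact that slice boundaries lie along reflex-vertex rays, hence along pixel edges. Once this local geometric claim is nailed down, the rest is bookkeeping over the pixel/cross correspondence, which is routine.
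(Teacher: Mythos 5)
Your argument is correct and follows essentially the same route as the paper's proof: both reduce arbitrary cameras to maximal guard-segments by a parallel shift followed by an extension (which only enlarges the visibility region), and both rest on the key geometric observation that a maximal guard-segment intersecting a slice-segment must span the corresponding slice and therefore sees all of it, so guarding reduces to hitting a supporting slice-segment of every cross. The one case worth making explicit is a guard-segment meeting a supporting slice-segment of the \emph{same} orientation: maximality then forces the guard to contain that slice-segment, hence the cross point itself, so it also intersects the perpendicular supporting slice-segment and your ``dictionary'' applies as stated.
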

\begin{proof}
First assume that we have such a set $S'\subseteq \Gamma$.  Any point
$p$ belongs to some pixel $\psi$, which corresponds to a cross $c$.
By assumption one 
guard $\gamma\in S'$ intersects one slice-segment $\sigma$ that supports
$c$.  Since $\gamma$ is maximal, it therefore sees the entire slice
corresponding to $\sigma$, therefore all of $\psi$, and therefore $p$.
So $S'$ is a set of sliding cameras that guard all of $P$.

Vice versa, assume we have a set $S$ of sliding cameras that guard all 
of $P$.  
For each camera, do a parallel translation until it hits the boundary of $P$,
then extend it until it is a maximal line segment within $P$; this
can only increase its visibility region.  
The resulting set $S'$ of cameras now are guard-segments that
guard $P$.  Now fix one cross $c$, which is
a point inside $P$ and hence is guarded by some $\gamma\in S'$.  This means
that for some $g\in \gamma$ the line segment $gc$ is perpendicular to 
$\gamma$ and inside $P$.  But $gc$ is part of one slice-segment that
supports $c$, and so $g$ is the intersection point between that slice-segment
and guard-segment $\gamma$, which proves the claim.
\end{proof}

We say that a guard-segment $\gamma$ 
\emph{hits} a cross $c$ if and only if $\gamma$ intersects one of the
slice-segments supporting $c$.  Lemma~\ref{lem:hitting} can then be re-stated 
as that $S'$ hits all crosses.    In fact,
the algorithms we design later will allow further restrictions:
we can specify exactly which crosses should be hit and which cameras 
may be used as guards.  So assume we are given some $X'\subseteq X$ and some 
$\Gamma'\subseteq \Gamma$.  The {\em $(X',\Gamma')$-sliding cameras problem} consists of 
finding a minimum subset of cameras in $\Gamma'$ that hit all crosses in $X'$.  
Note that with a suitable choice of $\Gamma'$ this encompasses both MSC and MHSC.

\section{Approximation Algorithms via $\varepsilon$-Nets}
\label{sec:eNet}

In this section, we give approximation algorithms for MSC and MHSC that are
based on phrasing the problem as a hitting set problem and then using
$\varepsilon$-nets.  We do this first for MHSC, and then later
re-use those $\varepsilon$-nets for MSC.

\noindent\paragraph{Hitting Sets.}  
A \emph{set system} is a pair $\mathcal{R}=(\mathcal{U}, \mathcal{S})$, 
where $\mathcal{U}$ is a universe set of objects and $\mathcal{S}$ is a collection of 
subsets of $\mathcal{U}$. A \emph{hitting set} for the set system $(\mathcal{U}, \mathcal{S})$ 
is a subset of $\mathcal{U}$ that intersects every set in $\mathcal{S}$.  

For the $(X',\Gamma')$-sliding camera problem, we construct a set system as follows. 
Let $\mathcal{U}=\Gamma'$ be all potential sliding cameras.  For each 
cross $c\in X'$ that needs to be hit, define $S_c$ to be all the cameras in $\mathcal{U}$ 
that hit $c$, and let $\mathcal{S}$ be the collection of these sets.
From the definitions, finding a hitting set for this set system is the same as solving the $(X',\Gamma')$-sliding-camera problem.

An {\em $\varepsilon$-net} for a set system $\mathcal{R}=(\mathcal{U}, \mathcal{S})$ is a subset $N$ of $\mathcal{U}$ such that every set $S$ in $\mathcal{S}$ with size at least $\varepsilon\cdot\lvert\mathcal{U}\rvert$ has a non-empty intersection with $N$. Br\"{o}nnimann and Goodrich~\cite{BronnimannG95} showed that $\varepsilon$-nets can be used to derive approximation algorithms as follows.  Define a \emph{net finder} to be a (poly-time) algorithm that, for a given set system $\mathcal{R}=(\mathcal{U}, \mathcal{S})$ and any given $r>0$, computes an $(1/r)$-net of $\mathcal{R}$ whose size is at most $s(r)$ for some function $s$.  Also, a {\em verifier} is a poly-time algorithm that, given a subset $H\subset \mathcal{U}$, states (correctly) that $H$ is a hitting set, or returns a non-empty set $R\in \mathcal{S}$ such $H$ does not hit $S$.

\begin{lemma}\cite{BronnimannG95}
Let $\mathcal{R}$ be a set system that admits both a poly-time net finder and a poly-time verifier. Then there is a poly-time algorithm that computes a hitting set
of size at most $s(4 \cdot \text{OPT})$, where OPT stands for the size of an optimal hitting set, and $s(r)$ is the size of the $(1/r)$-net.
\end{lemma}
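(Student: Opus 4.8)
The plan is to use the multiplicative-weights (reweighting) method of Br\"{o}nnimann and Goodrich. Since $\text{OPT}$ is not known in advance, I would run the procedure below for each candidate value $t=1,2,3,\dots$ in turn, and output the hitting set found at the first $t$ for which it succeeds. Within a single phase (fixed $t$), maintain an integer weight $w(u)\ge 1$ for every $u\in\mathcal{U}$, all initialized to $1$; for $S\subseteq\mathcal{U}$ write $w(S)=\sum_{u\in S}w(u)$, and let $w(\mathcal{U})$ be the total weight. One round does the following: run the net finder with $r=4t$ on the weighted system --- which we may simulate using the given unweighted net finder after replacing each $u$ by $w(u)$ identical copies, so that a ``$(1/(4t))$-net'' becomes a subset $N\subseteq\mathcal{U}$ meeting every $S\in\mathcal{S}$ with $w(S)\ge w(\mathcal{U})/(4t)$; then run the verifier on $N$. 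If the verifier certifies that $N$ is a hitting set, output $N$ and stop. Otherwise it returns some $S\in\mathcal{S}$ disjoint from $N$, and since $N$ is a $(1/(4t))$-net this forces $w(S)<w(\mathcal{U})/(4t)$; now double $w(u)$ for every $u\in S$ and proceed to the next round. If the number of rounds exceeds a threshold $\tau(t)$ (fixed below), abandon this $t$ and move to $t+1$.

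For correctness and the choice of $\tau$, fix an optimal hitting set $H^*$ with $|H^*|=\text{OPT}$ and analyze the phase $t=\text{OPT}$. In each round the set $S$ returned by the verifier intersects $H^*$ (since $H^*$ hits every set of $\mathcal{S}$), so after $k$ rounds, by the pigeonhole principle some single element of $H^*$ has been doubled at least $k/\text{OPT}$ times; hence $w(\mathcal{U})\ge w(H^*)\ge 2^{k/\text{OPT}}$. On the other hand, each round raises the total weight only by $w(S)<w(\mathcal{U})/(4t)$, i.e.\ multiplies it by at most $1+\tfrac{1}{4\,\text{OPT}}$, and it starts at $|\mathcal{U}|$, so $w(\mathcal{U})\le |\mathcal{U}|\,(1+\tfrac{1}{4\,\text{OPT}})^{k}\le |\mathcal{U}|\,e^{k/(4\,\text{OPT})}$. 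Combining the two bounds gives $2^{k/\text{OPT}}\le |\mathcal{U}|\,e^{k/(4\,\text{OPT})}$, which rearranges to $\tfrac{k}{\text{OPT}}(\ln 2-\tfrac14)\le \ln|\mathcal{U}|$; since $\ln 2>\tfrac14$, this yields $k\le c_0\,\text{OPT}\,\ln|\mathcal{U}|$ with $c_0=(\ln 2-\tfrac14)^{-1}$. So setting $\tau(t)=\lceil c_0\,t\,\ln|\mathcal{U}|\rceil+1$ guarantees that the phase $t=\text{OPT}$ halts with a valid output, and hence the whole algorithm halts at some phase $t\le\text{OPT}$.

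When the algorithm stops at phase $t$, the returned $N$ has been verified to be a hitting set, and its size is at most $s(4t)\le s(4\cdot\text{OPT})$, where the last step uses that $s$ may be assumed non-decreasing (otherwise replace $s$ by $r\mapsto\max_{r'\le r}s(r')$). The total running time is polynomial: there are at most $\text{OPT}\le|\mathcal{U}|$ phases, each of at most $\tau(t)=O(\text{OPT}\log|\mathcal{U}|)$ rounds, and every round makes one poly-time net-finder call and one poly-time verifier call. The one point that genuinely needs care --- and is really the crux of making this reduction \emph{algorithmic} rather than merely existential --- is bounding the cost of simulating the weighted net finder, i.e.\ checking that the duplicated universe stays polynomial in size. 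This follows from the weight bound above: in any phase we abort once $k>\tau(t)$, so always $w(\mathcal{U})\le|\mathcal{U}|\,(1+\tfrac{1}{4t})^{\tau(t)}=|\mathcal{U}|\,e^{O(\log|\mathcal{U}|)}=|\mathcal{U}|^{O(1)}$, whence the number of copies is polynomially bounded throughout and the construction runs in polynomial time.
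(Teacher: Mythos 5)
The paper states this lemma purely as a citation to Br\"{o}nnimann and Goodrich and offers no proof of its own, so there is nothing internal to compare against; your argument is a correct and complete rendition of their standard iterative-reweighting proof, including the guess-and-double search for $\text{OPT}$, the two competing bounds on $w(\mathcal{U})$ that cap the number of rounds at $O(\text{OPT}\log|\mathcal{U}|)$, and the (often-omitted) check that the duplication-based simulation of the weighted net finder stays polynomial. I see no gaps.
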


Thus, the lemma gives an $O(1)$-approximation algorithm for as long as we can find an $\varepsilon$-net
whose size is $O(1/\varepsilon)$.
(Clearly the hitting set problems defined by MHSC and MSC both have a 
polynomial-time verifier.)  

\noindent\paragraph{An $\varepsilon$-net for the MHSC Problem.}
We now show the existence of such a small
$\varepsilon$-net for MHSC.  For this, we need (yet another) 
reformulation that simplifies the problem.

\begin{lemma}
A set $S$ of horizontal guard-segments hits all crosses in a set $U'$ 
if and only if $S$ intersects all the vertical slice-segments that support
crosses in $U'$.
\end{lemma}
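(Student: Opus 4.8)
The plan is to reduce the relation ``a horizontal guard-segment hits a cross'' to a pure intersection condition involving only the \emph{vertical} slice-segments, by first ruling out the only other way a horizontal guard-segment could hit a cross, namely through the cross's horizontal supporting slice-segment.

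The key geometric fact I would establish is: \emph{no horizontal guard-segment intersects any horizontal slice-segment}. Since both objects are horizontal, they can intersect only if they lie on a common horizontal line. A guard-segment runs along pixel edges, so its supporting line has a $y$-coordinate that arises as a horizontal edge of $P$ or as a horizontal ray extended inward from a reflex vertex; these are precisely the horizontal cuts used when forming the pixelation (the vertical segmentation contributes no new horizontal cuts). On the other hand, a horizontal slice-segment halves a horizontal slice, so its supporting line lies strictly between the top and bottom edges of that slice. The slice has positive height and its top and bottom lie among the horizontal cuts just listed, so the midline lies strictly between two consecutive such cuts and is therefore not itself a cut. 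Hence the guard-segment and the slice-segment lie on distinct parallel lines and are disjoint.

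Given this fact, fix a cross $c$ supported by its two slice-segments, the horizontal $\sigma_H$ and the vertical $\sigma_V$, and let $\gamma$ be any horizontal guard-segment. By definition $\gamma$ hits $c$ iff $\gamma$ intersects $\sigma_H$ or $\sigma_V$; by the fact above $\gamma$ cannot intersect $\sigma_H$, so $\gamma$ hits $c$ iff $\gamma$ intersects $\sigma_V$. The lemma then follows by quantifying over $c$ and over $\gamma\in S$: the set $S$ hits every cross in $U'$ iff for every $c\in U'$ some $\gamma\in S$ intersects the vertical slice-segment supporting $c$, which is exactly the assertion that $S$ intersects all vertical slice-segments that support crosses in $U'$.

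The only content here is the geometric fact, and the only point needing care is the degenerate-looking situation — one must confirm that horizontal slices genuinely have positive height, so that their midline really does avoid all horizontal cuts — but this is immediate from the definition of the segmentation, where reflex rays are extended until they meet the boundary and thus produce honest rectangular cells. Consequently I expect no substantive obstacle: once the fact is isolated, the rest is a routine unwinding of the definition of ``hits''.
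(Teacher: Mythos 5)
There is a genuine gap in the justification of your ``key geometric fact.'' You argue that the supporting line of a horizontal slice-segment can never coincide with the supporting line of a horizontal guard-segment because ``the midline lies strictly between two consecutive such cuts and is therefore not itself a cut.'' But the top and bottom of a slice are consecutive among the horizontal cuts only \emph{locally}, i.e., within the $x$-range of that slice; globally, the polygon can have a horizontal edge elsewhere whose $y$-coordinate equals the midline of the slice. For a concrete example, take a base $[0,10]\times[0,2]$ with a left tower $[0,2]\times[2,8]$ and a middle tower $[4,6]\times[2,5]$: the left tower is a single horizontal slice with midline at $y=5$, while the top edge of the middle tower is a horizontal edge of $P$ at $y=5$, so the midline $y$-coordinate \emph{is} a cut coordinate. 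Your claim that the two segments ``lie on distinct parallel lines'' therefore fails, and with it the stated proof of the fact. (In such examples the two segments still turn out to be disjoint, because the maximal guard-segment at that height cannot reach the distant slice; but establishing that requires a different, more careful argument about how cut rays extend until they hit the boundary, which you do not give.)

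The paper's proof sidesteps this issue entirely and is the more robust route: it does not try to rule out an intersection of a horizontal guard-segment $\gamma$ with the horizontal supporting slice-segment $\sigma_H$. Instead it observes that if such an intersection occurs, then $\gamma$ and $\sigma_H$ are collinear, and since $\gamma$ is a \emph{maximal} horizontal segment in $P$ we get $\sigma_H\subseteq\gamma$; hence $\gamma$ contains the cross point $c$ itself and therefore intersects the vertical slice-segment $\sigma_V$ anyway. So ``$\gamma$ hits $c$'' implies ``$\gamma$ intersects $\sigma_V$'' in either case, which is all the lemma needs. I recommend replacing your key fact with this maximality argument; as written, your proof does not go through.
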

\begin{proof}
If camera $\gamma$ hits cross $c$, then it intersects either its
horizontal supporting slice-segment $\sigma_H$ or its vertical supporting
slice-segment $\sigma_V$. 
But if $\gamma$ intersects $\sigma_H$, then since both are horizontal
and $\gamma$ is maximal we have $\sigma_H\subseteq \gamma$,
in case of which $\gamma$ 
also contains point $c$ and therefore intersects $\sigma_V$.
So either way $\gamma$ intersects $\sigma_V$.
\end{proof}

For MHSC, it hence suffices to represent every cross by its vertical
slice-segment and so reduce the problem to the following:
Given a set of horizontal line segments ${\cal H}$
and a set of vertical line segments ${\cal V}$, find a minimum set
$S\subseteq {\cal H}$ such that every line segment in $\mathcal{V}$ is
intersected by $S$.  This problem is also known as the {\em Orthogonal
Segment Covering} problem and was shown to be NP-complete \cite{KatzMN2005}.
We hence have:

\begin{corollary}
MHSC reduces to the Orthogonal Segment Covering problem.
\end{corollary}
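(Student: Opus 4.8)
The plan is to obtain the reduction simply by chaining the two lemmas just proved and then checking that the construction is polynomial-size and value-preserving. Given an instance of MHSC on an orthogonal polygon $P$, I would first compute its pixelation, the set $X$ of crosses, the set $\mathcal{H}$ of maximal horizontal guard-segments, and the set $\mathcal{V}$ of vertical slice-segments. The pixelation has $O(n^2)$ pixels, so $|X|=O(n^2)$; there are $O(n)$ slice-segments and $O(n^2)$ guard-segments, and all of these can be enumerated in polynomial time, so the output instance has polynomial size.

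Next I would apply Lemma~\ref{lem:hitting} with $\Gamma'$ restricted to the horizontal guard-segments and $X'=X$: it states that $k$ horizontal sliding cameras guard $P$ if and only if there are $k$ horizontal guard-segments that together hit every cross of $X$. (The step that lets us restrict attention to horizontal guard-segments rather than arbitrary horizontal cameras is exactly the parallel-shift-and-extend argument made before Lemma~\ref{lem:hitting}, which only enlarges visibility and hence leaves the optimum unchanged.) Then I would apply the preceding (unlabeled) lemma with $U'=X$: a set $S$ of horizontal guard-segments hits every cross if and only if $S$ intersects every vertical slice-segment that supports a cross. Since every vertical slice is a nonempty rectangle subdivided into at least one pixel, every vertical slice-segment supports at least one cross, so this condition is precisely that $S$ intersects every segment of $\mathcal{V}$. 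Combining the two equivalences, the minimum number of horizontal sliding cameras that guard $P$ equals the minimum size of a subset $S\subseteq\mathcal{H}$ intersecting every segment of $\mathcal{V}$, which is exactly an instance of Orthogonal Segment Covering on the pair $(\mathcal{H},\mathcal{V})$.

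The last thing I would spell out is that this reduction preserves optimal values, not merely feasibility of some decision version: the correspondence between feasible solutions of the two problems is literally the identity map on subsets of $\mathcal{H}$, so a minimum solution on one side maps to a minimum solution on the other, and an $\alpha$-approximation transfers as well. There is essentially no hard step here; the only points requiring (routine) care are confirming that passing from arbitrary placements of horizontal cameras to horizontal guard-segments loses nothing — which is precisely what the discussion preceding Lemma~\ref{lem:hitting} establishes — and verifying that each vertical slice-segment supports a cross, so that $\mathcal{V}$ is exactly the family of segments that must be stabbed.
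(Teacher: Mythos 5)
Your proposal is correct and follows the same route as the paper: it chains Lemma~\ref{lem:hitting} with the immediately preceding lemma to replace each cross by its vertical supporting slice-segment, yielding exactly the Orthogonal Segment Covering instance on the horizontal guard-segments and the vertical slice-segments. The extra details you supply (polynomial size, every vertical slice-segment supports a cross, value preservation via the identity map on solutions) are routine verifications the paper leaves implicit.
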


In the following, we show that the Orthogonal Segment Covering problem
has a small  $\varepsilon$-net; by the above this immediately implies a
small $\varepsilon$-net for the hitting set problem for MHSC.
 
\begin{lemma}
The Orthogonal Segment Covering problem has a $(1/r)$-finder
with size-function $s(r)\in O(r)$.
\end{lemma}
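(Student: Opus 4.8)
The plan is to produce a net finder that, given $r$, returns a $(1/r)$-net $N\subseteq\mathcal H$ of size $O(r)$ in polynomial time, by a left-to-right sweep that generalizes the standard $O(1/\varepsilon)$-net for one‑dimensional interval stabbing. Write $n=|\mathcal H|$ and $k=\lceil n/r\rceil$; for a vertical segment $v$ let $S_v\subseteq\mathcal H$ be the horizontal segments meeting $v$, so we must hit every $v$ with $|S_v|\ge n/r$, i.e.\ with $|S_v|\ge k$. First I would reformulate this combinatorially. Order the segments of $\mathcal H$ by $y$-coordinate; for a vertical line $x=\xi$ let $C(\xi)$ be the horizontal segments crossing it, written in this $y$-order. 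If $v$ has abscissa $\xi$, then $S_v$ is exactly the block of consecutive entries of $C(\xi)$ whose heights lie in $v$'s vertical range, so $v$ is heavy iff that block has length $\ge k$, and $N$ hits $v$ iff $N\cap C(\xi)$ meets that block. Hence it suffices to construct $N$ of size $O(n/k)=O(r)$ such that for every $\xi$, the active net segments $N\cap C(\xi)$ pierce every window of $k$ consecutive entries of $C(\xi)$.

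I would then recall the one-dimensional base case, which controls a single vertical line: for intervals on a line with threshold $k$, repeatedly take, at the leftmost depth-$\ge k$ point not yet covered, the interval reaching furthest right, and jump past its right endpoint; this pierces all depth-$\ge k$ points with $\le m/k$ intervals, since each step permanently eliminates $\ge k$ intervals (all of them end no later than the one just chosen). The two-dimensional algorithm sweeps $\xi$ from left to right, maintaining $C(\xi)$ in $y$-order together with the currently active anchors $N\cap C(\xi)$, under the invariant that no $k$ consecutive entries of $C(\xi)$ are all unanchored. Each update to $C(\xi)$ --- a segment entering at its left endpoint, or leaving at its right endpoint (the latter merging the two adjacent unanchored runs when the leaving segment is an anchor) --- may create a run of $k$ consecutive unanchored entries; I repair it by adding to $N$ the segment of the maximal unanchored run involved that has the largest right endpoint, which splits the run.

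The crux, and the main obstacle, is the charging that bounds $|N|$ by $O(n/k)$. In the spirit of the base case, a repair that inserts an anchor $\alpha$ into a run $G$ is charged to the $\Omega(k)$ other members of $G$: since $\alpha$ has the largest right endpoint in $G$ and every member of $G$ currently crosses $\xi$, all of those members leave $C(\xi)$ before $\alpha$ does, so once a run boundary is an anchor chosen by this rule it survives at least as long as that run's interior does. The delicate part is to turn this into essentially disjoint charges: repairs forced by segments entering the polygon should be paid for by the $\Omega(k)$ insertions that re-grew the run since it was last split, while repairs forced by an anchor leaving should be paid for along the chain of successively farther-reaching anchors it spawns --- and one must bound how often any single segment or any single insertion can be charged, in particular ruling out a cascade of anchor-departure repairs that nothing pays for. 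Carrying this bookkeeping through is where essentially all the work lies; the rest is routine, and the sweep clearly runs in polynomial time. (Alternatively one could split the $x$-axis with a segment tree and reduce to $O(\log n)$ one-dimensional instances solved by the base case, which is conceptually cleaner but loses polylogarithmic factors, so it is the direct sweep that yields the stated $s(r)\in O(r)$.)
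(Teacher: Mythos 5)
Your reformulation is sound (the set $S_v$ is indeed a contiguous block of $C(\xi)$ in $y$-order, so the task reduces to maintaining, under a left-to-right sweep, a set of anchors piercing every window of $k$ consecutive active segments), but the proof has a genuine gap exactly where the lemma's content lies: the bound $|N|\in O(n/k)$ is never established. You describe the charging scheme only as a plan and you yourself flag the unresolved difficulty --- a cascade of repairs triggered by anchors leaving the sweep line, each adding a net member without any accompanying insertions to pay for it. That worry is real: when an anchor departs, the two adjacent unanchored runs merge into a run of up to $2k-2$ elements built partly from segments that were present (but separated by other anchors) long before, so the ``$\Omega(k)$ fresh insertions since the last split'' charge is not available, and the ``largest right endpoint'' rule does not obviously prevent a long chain of such departures. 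Since the whole point of the lemma is the $O(r)$ size bound, a construction whose size analysis is left open does not prove it.

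The paper avoids this combinatorial bookkeeping entirely by a geometric transformation: a horizontal segment $[x,x']\times y$ meets a vertical segment $a\times[b,b']$ iff the point $(x,y,x')$ lies in the range $(-\infty,a]\times[b,b']\times[a,\infty)$, an extruded translated quadrant whose corner lies on the plane $\{x=z\}$. Treating $y$ as time, any two such quadrants alive at the same moment are nested, so insertions and deletions each cause $O(1)$ changes to the union and the union complexity is linear; the Clarkson--Varadarajan random-sampling theorem then directly yields a $(1/r)$-net of size $O(r)$. If you want to salvage your direct sweep, you would need to either complete the amortized analysis (in particular, bound the number of departure-triggered repairs any one segment can participate in) or fall back on a nesting/structural observation playing the role that linear union complexity plays in the paper's argument; as written, the claimed $s(r)\in O(r)$ is asserted rather than proved.
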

\begin{proof}
We employ a result of
Clarkson and Varadarajan~\cite{clarksonV2007} that shows that 
$\varepsilon$-nets of small size can be found for hitting set problems
in geometric objects, using random sampling.  The size is as desired
under the assumption that these geometric objects have small union
complexity, i.e., the union of $n$ of these geometric objects has 
complexity $O(n)$.

Observe that a horizontal line segment
$\gamma=[x, x'] \times  y$ intersects
a vertical 
line segment $\sigma=a \times [b, b']$ if and only if the point $q=(x, y, x')$ 
lies in the range $Q=(-\infty, a] \times [b, b'] \times [a, \infty)$. Hence, 
we map every horizontal segment into a point and every vertical segment
into an extruded translated quadrant so that the segments intersect if and only
if the point lies in the extruded translated quadrant.

Viewing the $y$-coordinate as \emph{time}, each vertical segment hence
is mapped to a translated quadrant that exists for some time, and the union
complexity becomes the number of changes to a dynamic two-dimensional 
union of translated quadrants.  In general there could be a quadratic
number of such changes, but our translated quadrants are special: the
corner is on the line $\{x=z\}$.    As such, for any two such translated
quadrants one contains the other one, which means that adding or deleting
one such quadrant to the union causes a constant number of changes.
Hence the union complexity of our extruded translated quadrants is
linear. By \cite{clarksonV2007} hence there exists a $(1/r)$-net
with size $O(r)$ for hitting these extruded translated quadrants
by points, hence hitting vertical segments by horizontal segments.
\end{proof}

Combining the above results gives:

\begin{theorem}
There exists a poly-time  $O(1)$-approximation algorithm for the Orthogonal
Segment Covering problem and the MHSC problem.
\end{theorem}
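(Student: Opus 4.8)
The plan is to assemble the three ingredients already in place: the Br\"{o}nnimann--Goodrich framework, the linear-size net finder for the Orthogonal Segment Covering problem, and the (exact) reduction from MHSC to that problem. The only thing still missing is a verifier, which is routine.

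First I would exhibit a poly-time verifier for the Orthogonal Segment Covering instance $({\cal H},{\cal V})$: given a candidate $H\subseteq {\cal H}$, iterate over every $\sigma\in{\cal V}$ and test whether some segment of $H$ crosses $\sigma$. This runs in time polynomial in $|{\cal H}|+|{\cal V}|$, and it either certifies that $H$ covers all of ${\cal V}$ or outputs a segment $\sigma\in{\cal V}$ that is missed (equivalently, the set $S_\sigma$ of the associated set system that $H$ fails to hit). Combined with the $(1/r)$-net finder of size $s(r)\in O(r)$ from the previous lemma, the Br\"{o}nnimann--Goodrich lemma then yields a poly-time algorithm that produces a cover of size at most $s(4\cdot\text{OPT})=O(\text{OPT})$, i.e., an $O(1)$-approximation for Orthogonal Segment Covering.

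For MHSC I would invoke the corollary that MHSC reduces to Orthogonal Segment Covering, checking that this reduction is polynomial and approximation-preserving. Building the pixelation, the crosses, and the horizontal and vertical slice-segments takes time polynomial in $n$ (even though there may be $\Theta(n^2)$ pixels). By the earlier lemma, a set of horizontal guard-segments hits all crosses if and only if it intersects all vertical slice-segments supporting those crosses; hence the two instances have optimal solutions of exactly the same size, and any $c$-approximate cover of the constructed Orthogonal Segment Covering instance is, verbatim, a $c$-approximate set of horizontal sliding cameras for $P$. Feeding this instance to the algorithm above therefore gives a poly-time $O(1)$-approximation for MHSC as well.

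The one point needing care -- rather than a genuine obstacle -- is the bookkeeping on constants and polynomial bounds: one must check that the random sample used by the net finder and the verifier calls made inside the Br\"{o}nnimann--Goodrich iteration all stay polynomial in the (polynomial) size of the Orthogonal Segment Covering instance, and that the hidden constant in $s(4\cdot\text{OPT})=O(\text{OPT})$ is absolute, independent of the input. Since $s(r)=O(r)$ is linear with an absolute constant, both are immediate, and the theorem follows by composition of the stated results.
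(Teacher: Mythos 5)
Your proposal is correct and matches the paper's argument exactly: the paper proves this theorem simply by "combining the above results" (the Br\"{o}nnimann--Goodrich lemma, the $O(r)$-size net finder, the trivially poly-time verifier noted parenthetically, and the reduction of MHSC to Orthogonal Segment Covering). You merely spell out the routine details the paper leaves implicit.
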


\noindent\paragraph{An $\varepsilon$-net for the MSC Problem.}  Using the $\varepsilon$-net
for MHSC, we can easily find one for MSC and hence have an approximation
algorithm for this as well.

\begin{theorem}
There exists a poly-time $O(1)$-approximation algorithm for the MSC problem.
\end{theorem}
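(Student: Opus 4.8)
The plan is to build an $\varepsilon$-net for the MSC set system by reusing the MHSC $\varepsilon$-net twice — once for horizontal cameras and once for vertical cameras — and then invoke the Br\"onnimann--Goodrich framework. Recall the set system for the $(X,\Gamma)$-sliding camera problem: the universe is $\Gamma = \Gamma_H \cup \Gamma_V$ (all maximal horizontal and vertical guard-segments), and for each cross $c \in X$ the set $S_c$ consists of all guard-segments hitting $c$. The key structural observation is that $S_c$ splits naturally: $S_c = S_c^H \cup S_c^V$, where $S_c^H$ are the horizontal cameras hitting $c$ (equivalently, by the earlier lemma, the horizontal cameras intersecting $c$'s vertical supporting slice-segment $\sigma_V$) and $S_c^V$ are the vertical cameras hitting $c$ (those intersecting $c$'s horizontal supporting slice-segment $\sigma_H$).

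First I would observe that if $S_c$ is ``large'' — say $|S_c| \ge \varepsilon |\Gamma|$ — then at least one of $|S_c^H| \ge \tfrac{\varepsilon}{2}|\Gamma| \ge \tfrac{\varepsilon}{2}|\Gamma_H|$ or $|S_c^V| \ge \tfrac{\varepsilon}{2}|\Gamma_V|$ holds. Now apply the MHSC net finder (the previous lemma, with union-complexity-based sampling of Clarkson--Varadarajan) to the horizontal-camera-versus-vertical-slice-segment instance with parameter $r' = 2r$ to get a net $N_H \subseteq \Gamma_H$ of size $O(r)$; by the symmetric argument (swapping the roles of horizontal and vertical), get $N_V \subseteq \Gamma_V$ of size $O(r)$ that hits every vertical camera's side whenever $S_c^V$ is an $\tfrac{1}{2r}$-fraction of $\Gamma_V$. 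Then I would set $N = N_H \cup N_V$, which has size $O(r)$. For any cross $c$ with $|S_c| \ge \tfrac{1}{r}|\Gamma|$: if $S_c^H$ is large relative to $\Gamma_H$ then $N_H$ contains a horizontal camera hitting $\sigma_V$, hence hitting $c$; otherwise $S_c^V$ is large relative to $\Gamma_V$ and $N_V$ contains a vertical camera hitting $c$. Either way $N \cap S_c \neq \emptyset$, so $N$ is a valid $(1/r)$-net of size $O(r)$.

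Finally, since MSC clearly admits a polynomial-time verifier (pick any unguarded cross, which exists iff some pixel is unseen, checkable in polynomial time), the Br\"onnimann--Goodrich lemma yields a polynomial-time algorithm producing a hitting set — i.e., a set of sliding cameras guarding $P$ — of size $s(4\cdot\mathrm{OPT}) = O(\mathrm{OPT})$. By Lemma~\ref{lem:hitting} this is exactly an $O(1)$-approximation for MSC.

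I do not expect a serious obstacle here: the reduction is essentially ``run the MHSC machinery on each orientation and take the union,'' and the union of two $O(r)$-sized nets is $O(r)$. The one point requiring a little care — and the place a careless argument could go wrong — is the bookkeeping on the fractions: a net for $\Gamma_H$ catches sets that are a large fraction \emph{of $\Gamma_H$}, not of the full $\Gamma$, so one must pass the halved parameter $r' = 2r$ (and possibly another factor, harmlessly absorbed into the $O(\cdot)$) to each sub-net-finder and check that $|S_c^H| \ge \tfrac{1}{2r}|\Gamma| \ge \tfrac{1}{2r}|\Gamma_H|$ really does trigger the sub-net. This is routine, so the whole argument is short.
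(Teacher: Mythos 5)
Your proposal is correct and follows essentially the same route as the paper: take a $(1/2r)$-net for the horizontal-camera subproblem and one for the vertical-camera subproblem, union them, and observe that any cross hit by a $1/r$-fraction of all cameras is hit by a $1/(2r)$-fraction of one orientation's cameras (relative to that orientation's smaller universe), so the corresponding sub-net catches it. Your explicit bookkeeping on the fractions ($|S_c^H| \ge \tfrac{1}{2r}|\Gamma| \ge \tfrac{1}{2r}|\Gamma_H|$) is exactly the step the paper handles with its ``w.l.o.g.\ at least half are horizontal'' argument.
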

\begin{proof}
Fix a polygon $P$ and consider the
$(X',\Gamma')$-sliding camera problem for $P$.
It suffices to show that for any $r>0$ there exists a $1/r$-net $T$ 
of size $O(r)$ for the corresponding hitting set problem $\mathcal{R}$.
Let $T_H$ be a $1/2r$-net for the hitting
set of MHSC for $P,X'$ and the horizontal cameras in $\Gamma'$.
Let $T_V$ be a $1/2r$-net for the
hitting set of MVSC (i.e., when we want to guard the polygon using only
vertical sliding guards) for $P,X'$ and the vertical cameras in $\Gamma'$. 
Set $T:=T_H\cup T_V$.
We claim that $T$ is a $1/r$-net for $\mathcal{R}$. 

So assume some set $S_c$ in the hitting-set problem satisfies 
$|S_c|\geq |{\cal U}|/r$.  Translating back, this means that
some cross $c\in X'$ is hit by at least $|\Gamma'|/r$ guard-segments. 
Assume w.l.o.g.~that at least half of these hitting guard-segments are
horizontal.  
Then the vertical slice-segment $\sigma_V$ that supports $c$ intersects
at least $|\Gamma'|/2r$ horizontal guard-segments in $\Gamma'$.  
By definition of a $(1/2r)$-net, therefore there is a line segment 
$\gamma\in T_H$ that intersects $\sigma_V$.
Therefore $\gamma\in T$ hits $c$ as required.
\end{proof}

\section{Polygons Whose Pixelation Has Bounded Treewidth}
\label{sec:treedecomp}

Recall that the {\em pixelation} of a polygon is obtained by cutting
the polygon horizontally and vertically at all reflex vertices.
The {\em dual graph $D$ of the pixelation} is obtained by interpreting
the pixelation as a planar graph and taking its weak dual (i.e., dual
graph but omit the outer face).
Thus, $D$ has a vertex
for every pixel of $P$, and two pixels are adjacent in $D$ if and only 
if they share a side.    We now show how to solve MSC and MHSC under
the assumption that $D$ has small treewidth.  (Our approach was inspired
by a similar result for a different guarding problem \cite{otherESA}, but 
the construction here is simpler.)

\noindent\paragraph{2-dominating Set in an Auxiliary Graph.}
By Lemma~\ref{lem:hitting}, the $(X',\Gamma')$-sliding camera problem is 
equivalent to finding a set of guard-segments
that hits at least one supporting slice-segment of each cross.  
This naturally gives rise to an auxiliary graph $H$ as follows:
The vertices of $H$ are $X'\cup \Sigma\cup \Gamma'$.
For any $c\in X'$, add an edge from $c$ to each of its two supporting
slice-segments.
For any guard-segment $\gamma$, add an edge to 
any slice-segment that it intersects.
From Lemma~\ref{lem:hitting}, and since there are no edges from $X'$ to $\Gamma'$, one immediately sees the following:

\begin{lemma}
The minimum guard set for the $(X',\Gamma')$-sliding-cameras problem corresponds to a subset 
$S\subseteq \Gamma'$ of vertices in $H$ such that all vertices in $X'$ are within distance 2 from $S$.
\end{lemma}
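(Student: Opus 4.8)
The plan is to unwind the definitions and check that the phrase ``within distance $2$ from $S$'' in $H$ is nothing more than the ``hits'' condition of Lemma~\ref{lem:hitting} rephrased in graph-distance language. First I would note the layered (indeed bipartite) structure of $H$: by construction every edge of $H$ joins a vertex of $\Sigma$ to a vertex of $X'\cup\Gamma'$, since the only edges are cross-to-supporting-slice-segment edges and guard-segment-to-intersected-slice-segment edges. Hence for $c\in X'$ and $\gamma\in\Gamma'$ we have $c\neq\gamma$ and $c,\gamma$ non-adjacent, so the distance in $H$ between $c$ and $\gamma$ is at least $2$, and it equals $2$ exactly when $c$ and $\gamma$ have a common neighbour $\sigma\in\Sigma$.

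Next I would translate what such a common neighbour is. The only neighbours of $c$ in $H$ are its two supporting slice-segments, and $\gamma$ is adjacent to $\sigma\in\Sigma$ precisely when $\gamma$ intersects $\sigma$. So the distance in $H$ from $c$ to $\gamma$ is at most $2$ if and only if one of the two slice-segments supporting $c$ is intersected by $\gamma$, i.e.\ if and only if $\gamma$ hits $c$ in the sense defined just after Lemma~\ref{lem:hitting}. Quantifying over $S$, a subset $S\subseteq\Gamma'$ has all of $X'$ within distance $2$ if and only if every cross in $X'$ is hit by some $\gamma\in S$.

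Finally I would invoke Lemma~\ref{lem:hitting}, which says exactly that the latter condition characterises the feasible solutions of the $(X',\Gamma')$-sliding-cameras problem (the passage from cameras to guard-segments in $\Gamma'$ via parallel shift and extension being already part of that lemma). Thus the feasible guard sets are precisely the subsets of $\Gamma'$ that $2$-dominate $X'$ in $H$, and taking a minimum-cardinality set on either side yields the claimed correspondence. I do not anticipate a real obstacle here: the only point requiring care is the bipartite structure of $H$, which rules out distance $0$ or $1$ between $X'$ and $\Gamma'$ and forces every distance-$2$ witness to pass through a supporting slice-segment; with that in hand the statement is a direct restatement of Lemma~\ref{lem:hitting}.
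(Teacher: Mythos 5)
Your proposal is correct and matches the paper's (very terse) justification: the paper simply observes that the claim follows from Lemma~\ref{lem:hitting} together with the fact that $H$ has no edges between $X'$ and $\Gamma'$, which is exactly the bipartite-structure argument you spell out. Your version just fills in the routine details (distance $2$ forces a common neighbour in $\Sigma$, and such a neighbour is precisely a supporting slice-segment witnessing ``hits''), so there is nothing to add.
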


The above lemma means that the sliding-camera problem reduces 
to a graph-theoretic problem that is quite similar to the 2-dominating set
(the problem of finding a set $S$ such that all other vertices have
distance at most 2 from $S$); the only change is that we restrict
which vertices may be used for $S$ and which vertices must be within
distance 2 from $S$.  2-dominating set is an NP-hard problem in general,
but is easily shown to be polynomial in graphs that have bounded treewidth,
which we define next.

\noindent\paragraph{Treewidth.} A {\em tree decomposition} ${\cal T}=(I,{\cal X})$
of a graph $G=(V,E)$ consists of a tree $I$ and an assignment 
${\cal X}:V(I)\rightarrow 2^V$ of \emph{bags} of vertices
of $G$ to nodes of $I$
 such that the following holds: (a) for every vertex $v\in V$,
the set of bags containing $v$ forms a non-empty connected subtree of $I$,
(b) for every edge $e\in E$, at least one bag contains both ends of $e$.
The {\em width} of a tree decomposition is the maximum bag-size minus 1,
and the {\em treewidth} $tw(G)$ of a graph $G$ is the smallest possible width  
over all tree decompositions of $G$.    In particular, a tree has treewidth 1.

\begin{lemma}
Let $P$ be a polygon whose dual graph $D$ of the pixelation has treewidth at 
most $k$.
Then for any choice of $X'\subseteq X$ and $\Gamma' \subseteq \Gamma$,
the auxiliary graph $H$ has treewidth at most $7k+6$.
\end{lemma}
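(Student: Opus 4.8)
The plan is to build a tree decomposition of $H$ from a tree decomposition of $D$ by replacing each bag of $D$ with a somewhat larger bag of $H$, keeping the underlying tree $I$ the same. Recall that $H$ has vertex set $X'\cup\Sigma\cup\Gamma'$, that each cross in $X'$ corresponds to a pixel, that each slice-segment in $\Sigma$ corresponds to a slice (a maximal run of pixels in one row or column of the pixelation), and that each guard-segment in $\Gamma'$ is a maximal run of pixels along pixel-edges. Since $D$ has one vertex per pixel, it is natural to associate with every pixel $\psi$ the vertex $c_\psi\in X'$ (its cross), and then to associate slice-segments and guard-segments with pixels they "touch". The first step is to make these associations precise: a slice-segment $\sigma$ will be charged to one designated pixel (say, a canonical endpoint pixel of the corresponding slice), and likewise for a guard-segment $\gamma$. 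The reason a single designated pixel suffices is that $D$ restricted to the pixels of any one slice is connected (consecutive pixels in a row/column share a side), so by property (a) of tree decompositions the bags of $D$ containing these pixels form a connected subtree; this lets us "slide" a label along that subtree.

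The second step is to define the new bags. Start with a tree decomposition $\mathcal T=(I,\mathcal X)$ of $D$ of width at most $k$, so every bag has at most $k+1$ pixels. For a node $t$ of $I$ with bag $\mathcal X(t)=\{\psi_1,\dots,\psi_m\}$, $m\le k+1$, define the new bag $\mathcal X'(t)$ to contain: (i) the crosses $c_{\psi_1},\dots,c_{\psi_m}$; (ii) for each $\psi_i$, the at most two slice-segments supporting $c_{\psi_i}$ (its horizontal and vertical slice-segment); (iii) for each $\psi_i$, the at most two guard-segments through $\psi_i$ (one horizontal, one vertical) — if they lie in $\Gamma'$; (iv) "passing" slice-segments and guard-segments, namely those whose canonical pixel is not in $\mathcal X(t)$ but whose corresponding slice/guard-run passes through some $\psi_i\in\mathcal X(t)$, again at most two of each per pixel. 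A quick count gives $|\mathcal X'(t)|\le (k+1)\cdot(1+2+2+2) = 7(k+1) = 7k+7$, so the width is at most $7k+6$, matching the claimed bound.

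The third step is to verify the two tree-decomposition axioms for $(I,\mathcal X')$. For the connected-subtree property (a): a cross $c_\psi$ appears exactly where $\psi$ appears in $\mathcal X$, which is connected by hypothesis; a slice-segment $\sigma$ appears at nodes whose $D$-bag contains a pixel of the slice of $\sigma$ — this is a union of the connected subtrees for those pixels, and since consecutive pixels of a slice are $D$-adjacent their subtrees pairwise overlap (each $D$-edge forces its two endpoints into a common bag), so the union is connected; the same argument handles guard-segments, using that a guard-run is also a path in $D$. For the edge-covering property (b): an edge of $H$ is either (cross $c_\psi$ — its supporting slice-segment $\sigma$), which is covered at any node whose $D$-bag contains $\psi$ since we put both $c_\psi$ and $\sigma$ in $\mathcal X'(t)$ there; or (slice-segment $\sigma$ — guard-segment $\gamma$ that intersects $\sigma$), and here the key geometric observation is that if a horizontal guard-run and a vertical slice (or vice versa) intersect, they share a common pixel $\psi$, and at any node $t$ with $\psi\in\mathcal X(t)$ both $\sigma$ and $\gamma$ were inserted into $\mathcal X'(t)$ by rules (ii)–(iv).

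The main obstacle I expect is the bookkeeping in step three for the slice-segment/guard-segment edges: one must be careful that "the guard $\gamma$ and the slice-segment $\sigma$ share a pixel" really follows from "$\gamma$ intersects $\sigma$" — the subtlety is that $\gamma$ might intersect $\sigma$ in a point that is a pixel-corner rather than a pixel-interior, and one needs the convention (already implicit in the paper, since cameras run along pixel edges) that such an intersection still certifies a shared pixel in the pixelation. Closely related is making sure the constant $7$ is honest: one should double-check that no slice-segment or guard-segment needs to be charged to more than the pixels in a single $D$-bag at once, i.e., that rules (i)–(iv) together with the connectivity argument really put every relevant object into some bag, without inflating bag sizes beyond $7(k+1)$. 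Everything else is routine: the counting in step two and the subtree-connectivity in step three are standard manipulations of tree decompositions once the pixel-to-object charging is set up correctly.
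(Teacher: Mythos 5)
Your overall strategy is exactly the paper's: keep the tree $I$, and replace each pixel $\psi$ in each bag by a constant number of objects of $H$ associated with $\psi$, then check the two axioms via connectivity of the relevant pixel sets in $D$ and via shared pixels for edge-covering. However, your bookkeeping for guard-segments does not work as written, and this is precisely the point you yourself flag as unresolved. Rule (iii) speaks of ``the at most two guard-segments through $\psi_i$ (one horizontal, one vertical),'' but guard-segments run along pixel \emph{edges}, so a pixel has up to \emph{four} distinct guard-segments associated with it --- one along each of its N, E, S, W sides --- and there is no canonical choice of ``the'' horizontal one. If (iii) really keeps only two of them, then an edge $(\sigma,\gamma)$ of $H$ can go uncovered: the intersection point of $\gamma$ with $\sigma$ may lie on, say, the west side of the unique pixel $\psi$ whose closure contains it (e.g.\ when $\gamma$ runs along the polygon boundary), and if your rule retained only the east-side guard for $\psi$, no bag contains both $\sigma$ and $\gamma$. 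If instead rule (iv) is meant to sweep up the remaining side-guards, then the arithmetic $1+2+2+2$ is wrong: (iv) as stated contributes up to two slice-segments (which are redundant, since a pixel's two slices are exactly the supporters of its cross) \emph{and} up to two more guard-segments per pixel, so the literal count is $9(k+1)$, not $7(k+1)$. The ``canonical endpoint pixel'' device from your first step is also a red herring --- you end up (correctly) putting each object into every bag that contains any pixel it is associated with, and connectivity follows from connectivity of that pixel set in $D$, so no designated pixel is needed.

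The fix is exactly the paper's accounting: for each pixel $\psi$ in a bag, insert its cross, its two supporting slice-segments, and the \emph{four} guard-segments running along its four sides, for a total of $1+2+4=7$ items per pixel and width $7(k+1)-1=7k+6$. With this association, your edge-covering argument goes through cleanly: if a guard-segment $\gamma$ intersects a slice-segment $\sigma$ at a point $x$, then $x$ lies on the boundary of some pixel $\psi$ whose cross is supported by $\sigma$ and one of whose sides carries $\gamma$, so every bag that contained $\psi$ now contains both $\sigma$ and $\gamma$. The connectivity checks you give (pixels of a slice form a connected subgraph of $D$, as do the pixels bordered by a fixed guard-segment) are correct and are the same as in the paper.
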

\begin{proof}
Let ${\cal T}=(I,{\cal X})$ be a tree decomposition of $D$ of width $k$.
Any bag ${\cal X}(i)$ hence contains at most $k+1$ nodes of $D$.  Any node
of $D$ corresponds to a pixel $\psi$ of $P$, hence a cross $c$ in $X$.  Let
$\sigma_H$ and $\sigma_V$ be the two supporting slice-segments of $c$.  Also,
let $\gamma_N,\gamma_E,\gamma_S$ and $\gamma_W$ be the four guard-segments 
that run along the four sides of pixel $\psi$.
Define a tree decomposition for $H$
by using the same tree $I$ and defining a new bag ${\cal X}'(i)$ 
by replacing each pixel $\psi$ by the 7 items
$c,\sigma_H,\sigma_V,\gamma_N,\gamma_E,\gamma_S$ and $\gamma_W$.
(We can omit $c$ if it is not in $X'$ and any $\gamma$ that is not in $\Gamma'$.)
Clearly the new bag has size at most $7(k+1)$,
and so the width of the tree decomposition is at most $7k+6$.

It remains to argue that this is indeed a tree decomposition of $H$.
Clearly the subtree of bags that contain cross $c\in X'$ is connected,
since these are the same bags as the ones in ${\cal T}$ that contained
the corresponding pixel.
Any slice-segment $\sigma$ occurs in all bags of some cross supported by $\sigma$.
But the pixels of these crosses form a connected subgraph of $D$ (they
are connected along $\sigma$),
and for any connected subgraph the bags containing vertices of it
are connected.  Hence $\sigma$ occurs in a connected subtree of bags.
Likewise any guard $\gamma$ occurs in exactly those bags that contained
a pixel where $\gamma$ occupies one of the sides.  The subgraph of these
pixels is connected,  and so the bags containing $\gamma$ are also connected.

To verify the edge-condition, observe that if $(c,\sigma)$ is an edge from
a cross $c$ to slice-segment $\sigma$, then any bag containing $c$ also contains 
$\sigma$. Next, if $(\sigma,\gamma)$
is an edge for some slice-segment $\sigma$ and guard-segment $\gamma$, then let $x$ 
be the point where $\sigma$ and $\gamma$ intersect, and let $\psi$ be
a pixel containing $x$.  Then any bag that used to contain $\psi$ now
has $\sigma$ and $\gamma$ in it.
Hence all
edges are covered and we have constructed a tree decomposition of $H$.
\end{proof}

To apply this treewidth-result, we must show that the
problem can be expressed as a suitable logic-formula.
(See e.g.~\cite[Chapter 7.4]{cygan2015} for more details.)
In particular, the following formula will do:  A set $S$ of guard-segments
that guards $X'$ satisfies
$$ S\subseteq \Gamma' \quad \land \quad \forall u\in X' \left( 
	\exists \sigma\in \Sigma \quad \text{adj}(u,\sigma) \land 
	\exists \gamma\in S \quad \text{adj}(\sigma,\gamma) \right) $$
(where $\text{adj}$ is a logic-formula to encode that the two parameters
are adjacent in $H$).  Since $H$ has bounded treewidth, we can find
the smallest set $S$ that satisfies this (or report that no such $S$
exists if $\Gamma'$ was too small) in linear time using Courcelle's
theorem \cite{Courcelle1990}.  Putting everything together, we hence have:

\begin{theorem}
If $P$ is a polygon whose dual graph has bounded treewidth, then
the $(X',\Gamma')$-sliding-cameras problem can be solved in linear time.
\end{theorem}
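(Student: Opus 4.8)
The plan is to assemble the three ingredients already developed in this section.  First I would invoke the previous lemma to reduce the $(X',\Gamma')$-sliding-cameras problem to the graph-theoretic problem of finding a minimum subset $S\subseteq\Gamma'$ of vertices of the auxiliary graph $H$ such that every vertex of $X'$ lies within distance $2$ of $S$.  This is a purely combinatorial reformulation and needs no new argument beyond quoting the lemma.

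Next I would bound the treewidth of $H$.  By hypothesis the dual graph $D$ of the pixelation has treewidth bounded by some constant $k$, and by the treewidth lemma above the auxiliary graph $H$ then has treewidth at most $7k+6$, which is again a constant.  Moreover a tree decomposition of $D$ of width $k$ can be computed in linear time (for fixed $k$, by Bodlaender's algorithm), and the construction in the proof of that lemma turns it into a tree decomposition of $H$ in linear time as well, so we obtain a width-$O(1)$ tree decomposition of $H$ within the claimed time bound.

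The final step is to observe that the distance-$2$ covering condition, together with the restriction $S\subseteq\Gamma'$ and the requirement that the covered set be exactly $X'$, is expressible by the monadic second-order formula displayed just above the theorem, using only the adjacency relation of $H$ and the unary predicates for membership in $\Gamma'$ and $X'$.  Since $H$ has bounded treewidth, the optimization version of Courcelle's theorem (the ``extended'' form that handles minimization over a monadic second-order definable family of vertex sets) lets us compute a minimum-size $S$ satisfying the formula, or correctly report infeasibility when $\Gamma'$ is too small, in time linear in $|V(H)|+|E(H)|$.  Since $H$ has size polynomial in $n$ (at most $O(n^2)$ crosses, slice-segments and guard-segments, and a polynomial number of incidences), the whole procedure runs in polynomial time; if one is careful to construct $H$ and its tree decomposition in time linear in the size of the pixelation, the running time is in fact linear in that size.

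The main obstacle is not any single step but making sure the pieces fit: one must check that the displayed formula really captures the minimum guard set (including the subtle point that we may not use cameras outside $\Gamma'$ and need not cover crosses outside $X'$), and that the version of Courcelle's theorem being cited is the one that returns an optimal witness set rather than merely a yes/no answer.  Both are standard, so the proof is essentially a matter of citing the right results in the right order; I would keep it short.
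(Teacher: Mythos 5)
Your proposal is correct and follows essentially the same route as the paper: reduce to the distance-2 covering problem on the auxiliary graph $H$ via the preceding lemma, bound the treewidth of $H$ by $7k+6$, and apply the optimization version of Courcelle's theorem to the displayed MSO formula. Your added caveat that ``linear'' should be read as linear in the size of the pixelation (which may be $\Theta(n^2)$) rather than in $n$ is a fair clarification but does not change the argument.
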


We give one application of this result.  A {\em thin} polygon
is a polygon for which no pixel-corner is in the interior.  MSC and MHSC
are NP-hard even for thin polygons with holes (as we will see in the
next subsection).  However, for thin polygons without holes, the dual
graph of the pixelation is clearly a tree, hence has bounded treewidth,
and both MSC and MHSC can be solved in linear time.

\begin{corollary}
If $P$ is a thin polygon without holes, then MSC and MHSC can be solved in linear time.
\end{corollary}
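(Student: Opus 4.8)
\emph{Proof proposal.}
The plan is to obtain this as an immediate consequence of the preceding treewidth theorem, so the only thing that needs checking is that a thin hole-free polygon has a pixelation whose dual graph $D$ is a forest, hence has treewidth at most $1$. First I would observe that MSC is exactly the $(X,\Gamma)$-sliding-cameras problem, and that MHSC is the $(X,\Gamma_H)$-problem where $\Gamma_H\subseteq\Gamma$ is the set of horizontal guard-segments; both are therefore instances to which the theorem applies once we bound $tw(D)$.

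Next I would unwind the definition of \emph{thin}. Every vertex of the pixelation, viewed as a planar subdivision, is a corner of some incident pixel, and conversely every pixel-corner is such a vertex; so the hypothesis that no pixel-corner lies in the interior of $P$ is equivalent to saying that \emph{every vertex of the pixelation lies on the boundary of $P$}. In particular there are neither grid-line crossings nor T-junctions strictly inside $P$.

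The key step is then to show that $D$ is acyclic. Suppose, for contradiction, that $D$ contained a cycle, and realize this cycle as a closed curve $C$ passing through pixel-centers and crossing shared pixel-edges. Each shared pixel-edge is interior to $P$ (it bounds two pixels of $P$), so $C$ lies in the interior of $P$; since $P$ has no holes, the interior is simply connected and $C$ bounds a disk $R$ contained in the interior of $P$. But $C$ corresponds to a nonempty set of pixelation edges, and such an edge is crossed once by $C$, hence has an endpoint strictly inside $R$; that endpoint is a pixelation vertex lying in the interior of $P$, contradicting thinness. Hence $D$ is a forest, $tw(D)\le 1$ is bounded, and the theorem yields linear-time algorithms for the $(X,\Gamma)$- and $(X,\Gamma_H)$-problems, i.e.\ for MSC and MHSC.

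I expect the only delicate point to be this acyclicity argument --- specifically, recognizing that thinness forbids T-junctions as well as four-fold crossings, and pinning down why the hole-freeness hypothesis is exactly what rules out a dual cycle that merely wraps around a hole (with a hole present, the enclosed disk $R$ would leave the interior of $P$ and the argument fails, as it must). One could alternatively replace the topological reasoning by an Euler-formula count: in a simple hole-free polygon the outer boundary of the pixelation is a single cycle, so it has equally many boundary vertices and boundary edges; combining this with ``all vertices are boundary vertices'' and $|V|-|E|+|F|=2$ forces $|E(D)|=|V(D)|-1$, while $D$ is connected because no two pixels can meet at a single point once there are no interior vertices. Everything else is a direct appeal to the $(X',\Gamma')$ machinery already in place.
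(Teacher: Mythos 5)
Your proposal is correct and follows the same route as the paper: the paper derives the corollary from the bounded-treewidth theorem by simply asserting that for a thin hole-free polygon ``the dual graph of the pixelation is clearly a tree.'' You supply a valid justification (via the Jordan-curve/simple-connectivity argument or the Euler count) for exactly that claim, which the paper leaves implicit.
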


This result is not directly comparable to existing results~\cite{katz2008, deBergCOCOA2014}: it is stronger than these since it
works for MSC and does not require monotonicity, but it is weaker than
these since it requires a thin polygon.

A natural question is whether this result for bounded treewidth could
be used to generate a PTAS, by splitting the polygon (hence the planar graph) 
suitably and applying the ``shifting technique'' (see
\cite{baker1994} or \cite[Chapter 7.7.3]{cygan2015}).
We have not been able to develop such a PTAS, principally because
the cameras are not ``local'' in the sense that they can guard pixels
that have arbitrarily large distance in $D$.  Creating a PTAS (or proving 
APX-hardness) hence remains an open problem.

\section{NP-Hardness of MHSC}
\label{sec:npHardness}
Recall that the MHSC problem is polynomial-time solvable on simple orthogonal polygons~\cite{katz2011}. 
We show in this section that the problem becomes \textsc{NP}-hard on orthogonal polygons with holes. 
We note that the hardness proof of Durocher and Mehrabi~\cite{durocher2013} does not apply to the MHSC 
problem because they require both horizontal and vertical sliding cameras. 

The reader may recall that we showed that MHSC reduces to the
Orthogonal Segment Covering problem, which is known to be NP-hard 
\cite{KatzMN2005}.
However, this does not prove NP-hardness of MHSC, because not every instance
of Orthogonal Segment Covering can be expressed as MHSC.  Instead we give
a different reduction from Minimum Vertex Cover on max-deg-3 planar graphs.  
This problem (which is \textsc{NP}-hard~\cite{GareyJ1977}) 
consists of, given a planar graph $G=(V,E)$ 
with at most 3 incident edges at each vertex, find a minimum set 
$C\subseteq V$ such that for every edge at least
one endpoint is in $C$.  

Given a max-deg-3 planar graph $G$, we first compute a \emph{bar 
visibility representation} of $G$, that is, we assign to each vertex a
horizontal line segment (called {\em bar}) and to each edge $(v,w)$ a vertical 
{\em strip} of positive width that joins the corresponding bars and that is
disjoint from all other strips.
It is well-known that every planar graph has such a representation (see 
e.g.~Tamassia and Tollis~\cite{TamassiaT1986}), and it can be found in linear time.
By making strips sufficiently thin, we can ensure that no two strips of edges
occupy the same $x$-range.

From this visibility representation,
we construct an orthogonal polygon $P$ such that $G$ has a vertex cover of size $k$ if and only if $P$ can be guarded by $3N+k$ horizontal sliding cameras, where $N$ is the number of the vertices of $G$.  $P$ will have a hole for every
inner face of $G$ (and perhaps some additional holes).

An initial version of the polygon consists of the visibility representation,
with bars thickened into positive height.  Clearly if we have a vertex cover
of $G$, then we can place horizontal cameras in the bars of all corresponding
vertices and guard all edge-strips.  This is, however, not sufficient to also
guard all vertex-bars, which is why we add small gadgets that require
additional cameras, and these cameras guard none of the edge strips but all
other parts of the vertex bars.

We add three gadgets, called \emph{elephant gadgets}, around every vertex-bar $u$ in $\Gamma$. This might be done in one of the three different ways, up to symmetry, depending on how the edge-strips are connected to $u$; see Figure~\ref{fig:elephantGadgets} for an illustration.   Recall that we assume that no two edge-strips share an $x$-range, so that one of these constructions is always possible if the vertex has degree 3.  For vertices of smaller degrees, we omit some of the edge-strips at the ends of the bars (but not the elephant-gadgets) appropriately.

\begin{figure}[t]
\centering
\hspace*{\fill}
\includegraphics[page=1,width=.30\textwidth]{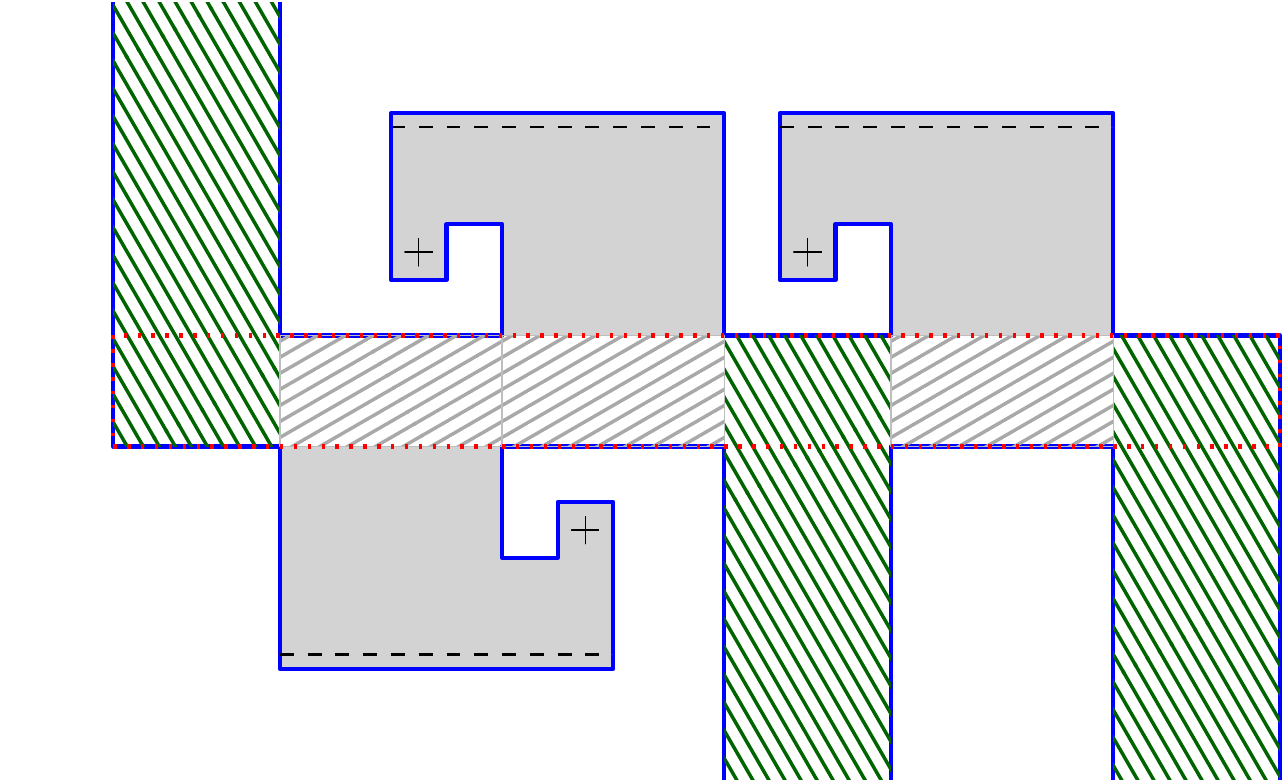}
\hspace*{\fill}
\includegraphics[page=2,width=.30\textwidth]{elephantsNew.pdf}
\hspace*{\fill}
\includegraphics[page=3,width=.30\textwidth]{elephantsNew.pdf}
\hspace*{\fill}
\caption{The NP-hardness construction.  Vertex-bars are red (dotted).  
Edge-strips are green (falling pattern).  Elephant-gadgets are gray; the dashed segment
indicates the spine-camera and the cross indicates the trunk.   Gray
(rising pattern) regions indicates the part of the vertex-box guarded by the
spine-cameras.}
\label{fig:elephantGadgets}%
\end{figure}

\begin{lemma}
\label{lem:onlyHSCReduction}
$G$ has a vertex cover of size $k$ if and only if $P$ can be guarded with $k+3N$ horizontal sliding cameras if and only if $P$ can be guarded with $k+3N$ sliding cameras.
\end{lemma}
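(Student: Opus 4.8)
## Proof Plan for Lemma~\ref{lem:onlyHSCReduction}

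The plan is to prove a chain of three equivalences by establishing one easy direction and one harder direction, and by observing that the horizontal-camera construction is "tight enough" that allowing vertical cameras cannot help. Let me label the three statements: (A) $G$ has a vertex cover of size $k$; (B) $P$ can be guarded with $k+3N$ horizontal sliding cameras; (C) $P$ can be guarded with $k+3N$ sliding cameras. Since a horizontal-camera solution is in particular a solution, (B) $\Rightarrow$ (C) is trivial. So it suffices to prove (A) $\Rightarrow$ (B) and (C) $\Rightarrow$ (A).

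\textbf{Direction (A) $\Rightarrow$ (B).} Given a vertex cover $C$ of size $k$, I would place, for each vertex $u$ of $G$, one "spine-camera" inside each of the three elephant gadgets attached to $u$'s bar — this accounts for $3N$ horizontal cameras — plus one horizontal camera running along the full bar of each vertex $u \in C$, accounting for $k$ more. I then argue that these $3N+k$ cameras see all of $P$. The three spine-cameras of a vertex are designed (see Figure~\ref{fig:elephantGadgets}) so that together they guard the entire vertex-box except possibly the thin portions that overlap edge-strips; the bar-camera of $u \in C$ guards the whole bar of $u$ and, sliding into each incident edge-strip, guards that strip entirely. Since $C$ is a vertex cover, every edge-strip is incident to some covered vertex and hence is guarded; and the parts of each vertex-box not reachable this way are exactly the parts covered by spine-cameras. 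This requires checking the geometry of each of the three gadget configurations, but it is the routine direction.

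\textbf{Direction (C) $\Rightarrow$ (A).} This is the main obstacle and the heart of the lemma. Suppose $P$ is guarded by a set $S$ of $k+3N$ (arbitrary, possibly vertical) sliding cameras. The key structural claim is that the elephant gadgets force at least three cameras to be "spent" near each vertex-bar that can contribute nothing to guarding the edge-strips: concretely, I would argue that each elephant gadget contains a feature (the "trunk" — a narrow pixel, marked by a cross in the figure) that can only be seen by a camera lying within that gadget, and moreover that no single camera can see the trunks of two different gadgets, and that such a camera — whether horizontal or vertical — cannot see into any edge-strip. Hence $S$ must contain at least $3N$ cameras dedicated to the gadgets, leaving a budget of at most $k$ cameras for everything else. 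I then show that the remaining cameras, restricted to what they can usefully see, can be replaced without loss by horizontal bar-cameras: any camera that sees part of an edge-strip $e=(v,w)$ must (by the thinness of the strips and the placement of the construction) be a horizontal camera lying in the bar of $v$ or of $w$, because a vertical camera inside a strip sees only a subset of what a bar-camera at either endpoint sees, and because the $x$-ranges of distinct strips are disjoint so one camera cannot serve two strips unless it is a bar-camera. Collecting the $\le k$ vertices whose bars are thus used gives a set $C$ with $|C| \le k$ that hits every edge, i.e., a vertex cover of size at most $k$ (pad arbitrarily to exactly $k$).

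\textbf{Why (C), not just (B), suffices.} The reason the same bound $k+3N$ works for general sliding cameras is precisely the two robustness observations used above: (i) the trunk of each elephant gadget is visible only from within that gadget and only from cameras that see nothing of any edge-strip, and this holds regardless of camera orientation; (ii) a vertical camera can never be "more useful" for the edge-strips than an appropriately chosen horizontal bar-camera. Thus the lower bound argument for (C) is genuinely the same argument as for (B), and no separate analysis of vertical cameras is needed beyond noting these monotonicity facts. The one place to be careful is that $P$ may contain "some additional holes" beyond the inner-face holes; I would check that these auxiliary holes are exactly the ones carved out to create the gadget trunks and the forced non-interference, so that they do not open up unexpected vertical sightlines — this case-check over the three gadget types is where the real work lies.
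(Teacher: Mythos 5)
Your proposal follows essentially the same route as the paper: the same placement of one spine-camera per elephant gadget plus one bar-camera per cover vertex for the (A)$\Rightarrow$(B) direction, and the same counting argument for (C)$\Rightarrow$(A), namely that the $3N$ trunk pixels each force a camera confined to its gadget and useless elsewhere, leaving at most $k$ cameras to be charged to vertices of $G$. The only cosmetic difference is that you phrase the final step as an exchange (replacing strip-cameras by bar-cameras), whereas the paper simply charges each camera meeting a bar or an edge-strip to one endpoint of the corresponding edge; both yield a vertex cover of size at most $k$.
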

\begin{proof}
Given a vertex cover $C$ of size $k$, we can place one  
horizontal ``vertex-camera'' inside the bar of each vertex of $C$, and 
one horizontal camera ``spine-camera'' inside each elephant-gadget. 
Then the vertex-cameras 
cover all edge-strips, and the spine-cameras cover all 
the elephants and the parts of vertex-bars not in edge-strips.  So we
can guard $P$ with $k+3N$ horizontal cameras.  

Clearly, if we can guard $P$ with $k+3N$ horizontal cameras, then it can
also be guarded with $k+3N$ arbitrary cameras.

Finally assume we can guard $P$ with $k+3N$ cameras (not necessarily horizontal
ones).  Define a vertex set $C$ as follows:  If any horizontal 
camera intersects the bar of vertex $v$, then add $v$ to $C$.  If any
vertical guard intersects the edge-strip of $(v,w)$, then arbitrarily
add one of $v$ and $w$ to $C$.  We claim that $|C|\leq k$.  This holds
because we have $3N$ elephants, and each elephant has a ``trunk''-pixel
that must be guarded, but can be guarded only by a camera 
that is entirely within that elephant.  So at least $3N$ cameras
are neither inside a vertex-bar nor
an edge-strip,  and $|C|\leq k$.  Finally we claim that $C$ is a vertex
cover.  Indeed, the strip of any edge $e$ contains a pixel that is not in a 
vertex-bar.  To guard this pixel, we must either use a vertical camera inside 
the edge-strip, or a horizontal camera inside the bar of one of the ends of 
the edge.  Either way, one endpoint of $e$ has been added to $C$.
\end{proof}

Clearly, $P$ can be constructed in polynomial time and so by Lemma~\ref{lem:onlyHSCReduction} NP-hardness follows.  We also notice that $P$ is thin.
NP-hardness of guarding problems in thin polygons (albeit with other models of guards and visibility) have been studied before~\cite{tomas2013,otherESA}.
The NP-hardness holds for both MSC and MSHC; NP-hardness of MSC was known
before \cite{durocher2013} but the constructed polygon was not thin.
We summarize:

\begin{theorem}
\label{thm:mhscHardness}
MSC and MHSC is \textsc{NP}-hard on thin orthogonal polygons with holes.
\end{theorem}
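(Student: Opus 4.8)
\noindent\emph{Proof proposal.}
The plan is to assemble the ingredients developed in this section into a polynomial-time many-one reduction from \textsc{Minimum Vertex Cover} on planar graphs of maximum degree $3$, which is NP-hard~\cite{GareyJ1977}. Given such a graph $G=(V,E)$ with $N=|V|$, the first step is to compute a bar visibility representation of $G$ in linear time~\cite{TamassiaT1986}, perturbing the edge-strips so that no two of them occupy the same $x$-range. Thickening every bar into positive height and attaching the three elephant gadgets to each vertex-bar --- choosing among the three configurations of Figure~\ref{fig:elephantGadgets} according to how the incident edge-strips attach to that bar, and omitting edge-strips at the ends of low-degree bars --- produces the orthogonal polygon $P$. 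Since each gadget has constant combinatorial complexity and there are $3N$ of them (and $|E|\le 3N/2$ edge-strips), $P$ has polynomially many vertices and is computable in polynomial time.

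Correctness is then immediate from Lemma~\ref{lem:onlyHSCReduction}, which states that $G$ has a vertex cover of size $k$ if and only if $P$ can be guarded with $k+3N$ horizontal sliding cameras, and equivalently with $k+3N$ arbitrary sliding cameras; the real combinatorial content of the reduction (the trunk pixel of each elephant forcing a dedicated camera, and each spine-camera covering the rest of its vertex-bar without helping to guard any edge-strip) is exactly what that lemma records. Consequently, a polynomial-time algorithm for either MHSC or MSC would decide \textsc{Minimum Vertex Cover} on max-degree-$3$ planar graphs in polynomial time, establishing NP-hardness of both problems. The only claim left in the theorem is that $P$ is \emph{thin}, i.e., that no pixel-corner of the pixelation lies in the interior of $P$.

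The thinness verification is the step I expect to require the most care, since everything else follows by citation. Because the pixelation is obtained by shooting horizontal and vertical rays inward from reflex vertices, thinness amounts to checking, in each of the finitely many local patterns that can arise, that these rays meet the boundary (or one another) only along shared pixel edges and never create an interior grid point. Two design choices make this work: the edge-strips are given pairwise distinct $x$-ranges, so their vertical walls never align, and the elephant gadgets are drawn so that each gadget's reflex vertices line up with the top or bottom edge of the vertex-bar it hangs from, so the horizontal rays they emit are absorbed immediately by that bar's boundary. Checking the three gadget templates of Figure~\ref{fig:elephantGadgets} under this discipline confirms that $P$ is thin, completing the proof; as noted, this already yields hardness for thin polygons, strengthening the (non-thin) construction of~\cite{durocher2013}.
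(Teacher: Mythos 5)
Your proposal matches the paper's own argument: the same reduction from \textsc{Minimum Vertex Cover} on max-degree-3 planar graphs via a bar visibility representation with elephant gadgets, with correctness delegated to Lemma~\ref{lem:onlyHSCReduction} and a final check that $P$ is thin (which the paper merely asserts, while you sketch the verification in slightly more detail). No substantive differences or gaps.
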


\section{Art-Gallery-Type Results}
\label{sec:artGalleryTheorem}
We now consider art-gallery-type problems for the MSC and MHSC problem; that is, we give tight bounds, depending on $n$, on the number of sliding cameras needed to guard an orthogonal polygon $P$ with $n$ vertices. 

Recall that Aggarwal showed a tight bound $\lfloor (3n+4)/16 \rfloor$ for the number of mobile guards necessary and sufficient to guard $P$~\cite{aggarwal1984}. Closer inspection reveals that the lower bound construction (see Figure~\ref{fig:lowerBoundGeneral}) actually works for sliding cameras, since no two of the $(3n+4)/16$ pixels marked with a cross can be guarded by one camera.

\begin{figure}[t]
\centering
\hspace*{\fill}
\includegraphics[width=.70\textwidth]{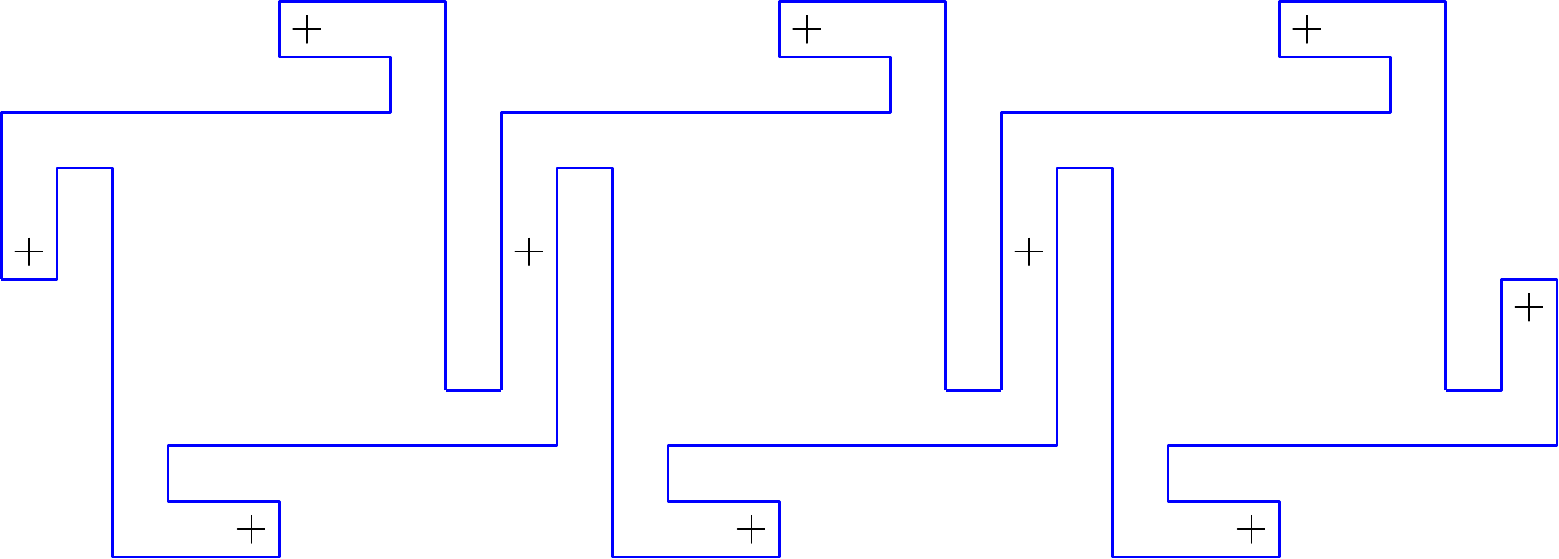}
\hspace*{\fill}
\includegraphics[width=.09\textwidth]{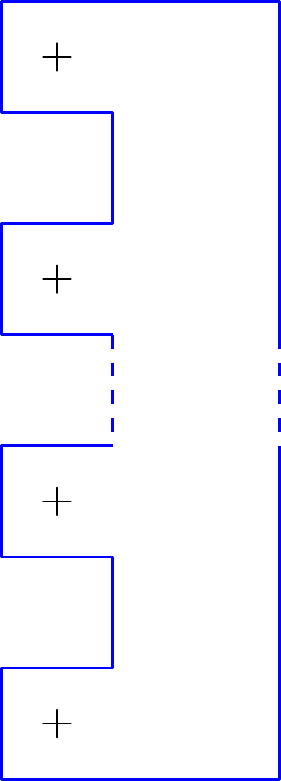}
\hspace*{\fill}
\caption{(Left) A polygon that requires $(3n+4)/16$ cameras.  (Right) A
polygon that requires $n/4$ horizontal cameras.}
\label{fig:lowerBoundGeneral}%
\end{figure}

The upper bound, indeed, also works for sliding guards.  We very briefly review the approach taken in \cite{aggarwal1984}.  
The idea is to guard first a small portion of $P$ using one or two mobile guards, cutting a guarded region out of $P$, and then guarding the rest of $P$ by an induction hypothesis.  There are numerous cases, but in all of them one can establish that indeed a sliding camera would have achieved the same as the mobile guard used. So we have the following result.

\begin{theorem}[Based on~\cite{aggarwal1984}]
\label{thm:bounds}
Given a simple orthogonal polygon $P$ with $n$ vertices, $\lfloor (3n+4)/16 \rfloor$ sliding cameras are always sufficient and sometimes necessary to guard $P$ entirely.
\end{theorem}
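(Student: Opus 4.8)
The plan is to prove the two bounds separately, in both cases borrowing Aggarwal's analysis for orthogonal mobile guards~\cite{aggarwal1984} and verifying that it survives the restriction to sliding cameras.

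\textbf{Lower bound.} I would take Aggarwal's extremal polygon, shown on the left of Figure~\ref{fig:lowerBoundGeneral}. It is a comb-like orthogonal polygon assembled from $\lfloor(3n+4)/16\rfloor$ repeated units, each containing one distinguished pixel (the crossed cells in the figure). The crucial claim is that no single sliding camera can see two distinguished pixels. To see this, observe that each distinguished pixel lies at the far end of a narrow offset shaft: a horizontal camera passing through it is confined to a short segment inside its own unit, while a vertical camera reaching one distinguished pixel is blocked by the offset before it can enter the shaft of any other unit, so the required \emph{perpendicular} line of sight cannot leave the unit. Hence the distinguished pixels need pairwise distinct cameras, giving the bound; the exact constant $\lfloor(3n+4)/16\rfloor$ follows by counting the vertices of the gadget, which is routine.

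\textbf{Upper bound.} I would reproduce Aggarwal's induction on $n$. The skeleton is: identify a constant-size local configuration on the boundary of $P$ (driven by a suitably chosen reflex vertex and the edges incident to the slices around it); place one, or in the harder cases two, orthogonal mobile guards that together see a subpolygon whose removal uses up enough vertices to keep the accounting balanced at $16/3$ vertices per guard; delete the closure of the guarded region, obtaining orthogonal polygons with fewer vertices; apply the induction hypothesis; and combine. The only new ingredient relative to~\cite{aggarwal1984} is the observation, to be checked case by case, that in every case the guard Aggarwal uses may be taken to be a sliding camera: the region it is responsible for is always a union of horizontal (resp.\ vertical) slices sharing the camera's track, so every point of that region sees the camera along a perpendicular segment staying inside $P$. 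Equivalently, via Lemma~\ref{lem:hitting}, one checks in each case that the chosen guard-segment hits every cross of the cut-off piece.

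\textbf{Main obstacle.} The difficulty is bookkeeping rather than conceptual: Aggarwal's proof has many cases, and for each we must confirm (i) that the region guarded by the placed guard is exactly a union of aligned slices, so that perpendicular visibility coincides with full visibility there, and (ii) that the base cases (a rectangle, an L-shape, and the other small instances) are handled by a single sliding camera, which they are. I would therefore isolate a lemma stating that ``in each case of~\cite{aggarwal1984} the mobile guard can be replaced by a sliding camera with the same visibility within the cut-off piece,'' after which Theorem~\ref{thm:bounds} follows verbatim from Aggarwal's recursion. I expect no trouble with the arithmetic; the only real risk is overlooking a case, which I would guard against by cross-checking tightness against the lower-bound construction above.
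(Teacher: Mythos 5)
Your proposal matches the paper's argument essentially verbatim: the lower bound reuses Aggarwal's crossed-pixel construction after checking that no single sliding camera sees two crosses, and the upper bound re-runs Aggarwal's inductive case analysis while verifying in each case that the mobile guard can be replaced by a sliding camera. The paper gives exactly this sketch (deferring the case-by-case verification to~\cite{aggarwal1984}), so your plan is correct and takes the same route.
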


Now we give an art-gallery theorem for MHSC.

\begin{theorem}
\label{thm:onlyHSCTheorem}
Given an orthogonal polygon $P$ with $n$ vertices, $\lfloor n/4\rfloor$ horizontal sliding cameras are always sufficient and sometimes necessary to guard $P$ entirely.
\end{theorem}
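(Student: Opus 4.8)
The plan is to prove the two directions separately: ``sometimes necessary'' is a lower‑bound construction, and ``always sufficient'' is an upper bound obtained by cutting $P$ into few pieces each guardable by one horizontal camera. For necessity I would build the polygon sketched in Figure~\ref{fig:lowerBoundGeneral} (right): a tall polygon made of $k$ congruent ``pockets'' stacked vertically, consecutive pockets joined only by narrow vertical corridors whose horizontal positions alternate between the two ends of the pockets, and the whole shape laid out so that it has exactly $n=4k$ vertices (hence $k=\lfloor n/4\rfloor$). Put one witness point $w_i$ in each pocket, chosen so that it is neither above nor below any corridor. As in Sections~\ref{sec:prelim}--\ref{sec:eNet}, a horizontal camera sees $w_i$ iff it crosses the maximal vertical segment of $P$ through $w_i$; because the corridors are offset, that vertical segment is contained in a single pocket, and any horizontal camera crossing it lies inside that same pocket. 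Thus no horizontal camera sees two witnesses, so at least $k=\lfloor n/4\rfloor$ cameras are required.

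For sufficiency I would invoke the classical fact underlying the $\lfloor n/4\rfloor$ orthogonal art‑gallery theorem (see, e.g., \cite{orourke1982}): every simple orthogonal polygon with $n$ vertices can be partitioned into at most $\lfloor n/4\rfloor$ \emph{$L$‑shaped} pieces, meaning axis‑parallel rectangles with at most one axis‑parallel rectangular corner removed. The point is that each such piece $R$ can be guarded by a single horizontal sliding camera: $R$ is vertically convex (every vertical line meets $R$ in one segment), and $R$ has a horizontal chord that spans its whole $x$‑extent — for a rectangle any horizontal chord works, and for a genuine $L$ any chord at a height not inside the removed corner works. Placing a horizontal camera on such a chord sees all of $R$, since for every $p\in R$ the vertical segment from $p$ to the camera stays inside $R\subseteq P$. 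As the pieces cover $P$, one camera per piece guards $P$ with at most $\lfloor n/4\rfloor$ horizontal cameras. If one wants a self‑contained argument, the $L$‑shape partition itself is reproved by the standard induction: pick a reflex vertex, extend an axis‑parallel chord, recurse on the two (or, with a holed polygon, one) resulting sub‑polygons, and charge $4$ vertices to each piece created.

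The substantive step is the partition count: forcing the induction to charge \emph{exactly} four vertices per $L$‑shaped piece (rather than, say, four and a half) requires the same careful case analysis on the local structure at each cut that drives the orthogonal art‑gallery theorem, and is really where all the work sits. A secondary obstacle is polygons with holes: if the statement is intended to cover them, one should first cut every hole open by a chord joining a hole‑vertex to the rest of the boundary to reduce to the simple case, and then re‑check that this does not push the effective vertex count past $4\lfloor n/4\rfloor$ — a somewhat delicate accounting. Everything else — the ``one horizontal camera per $L$‑shaped piece'' observation and the pocket construction for the lower bound (including the cosmetic adjustment of a constant number of vertices to make $n$ exactly $4k$) — is routine once the partition is in hand.
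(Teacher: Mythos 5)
Your proposal is correct and takes essentially the same route as the paper: a comb-style lower bound in which no horizontal camera can reach two witness pockets, and an upper bound via O'Rourke's partition of the polygon into at most $\lfloor n/4\rfloor$ pieces with at most one reflex vertex each, each such rectangle or $L$-shape being guardable by one horizontal camera on a chord spanning its full width. The only difference is cosmetic --- the paper phrases the per-piece camera placement as ``the larger of the two horizontal slices'' rather than via vertical convexity, and it likewise leans on the cited partition result without reworking the charging argument or the case of holes.
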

\begin{proof}
First, observe that that the comb-shaped polygon in Figure~\ref{fig:lowerBoundGeneral} requires $\lfloor n/4\rfloor$ cameras if we only allow horizontal cameras.
For the upper bound, we use O'Rourke's alternate proof 
\cite{orourke1983b} that $\lfloor n/4\rfloor$ stationary point guards 
are always sufficient to guard an orthogonal polygon with $n$ vertices.  He
proves this by showing that any orthogonal polygon can be cut into 
$\lfloor n/4 \rfloor$ orthogonal pieces that have at most one reflex vertex
each.  We claim that each such piece can be guarded with one horizontal
sliding camera.  This is obvious if the piece is a rectangle.  Else (because
the piece is orthogonal and has one reflex corner) it is an $L$-shape, and
by placing the guard in the larger of the horizontal slices we guard
the entire piece.  So $\lfloor n/4 \rfloor$ horizontal cameras are enough
to guard any orthogonal polygon.
\end{proof}

\section{Polygons with Path-Segmentations}
\label{sec:pathPolygons}

Recall that earlier we considered the dual graph of the pixelation of a
polygon $P$.  In this section, we again consider a dual graph, but this
time of one of the segmentations of $P$.  Thus, consider (say) the
vertical segmentation obtained after extending vertical rays from all
reflex vertices.  Interpret the segmentation as a planar graph, and
let $G$ be its weak dual graph obtainined by defining a vertex for
every vertical slice and connecting two slices if and only if they
share (part of) a side.  We say that a polygon {\em has a path-segmentation}
if $G$ is a path for at least one of the two possible segmentations.  
Observe that any monotone orthogonal polygons has a path-segmentation,
but the class is broader than that (e.g. we can ``stack'' multiple
monotone polygons by letting them alternatingly share their rightmost
and leftmost vertical slice.)

In this section, we show that $\lfloor (n+2)/6\rfloor$ sliding cameras are always sufficient and sometimes necessary to guard path polygons.   We need a helper-lemma.

\begin{lemma}
\label{lem:8vertices}
Let $P$ be an orthogonal polygon with up to 8 vertices.  Then $P$ can be guarded with a single sliding camera.
\end{lemma}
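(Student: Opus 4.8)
The plan is to argue by a case analysis on the number of reflex vertices of $P$. An orthogonal polygon with $n$ vertices has exactly $(n-4)/2$ reflex vertices, so a polygon with at most $8$ vertices has at most $2$ reflex vertices. If there are $0$ reflex vertices, $P$ is a rectangle and any maximal camera guards it. If there is exactly $1$ reflex vertex, $P$ is an $L$-shape; as observed already in the proof of Theorem \ref{thm:onlyHSCTheorem}, placing a horizontal camera in the larger of the two horizontal slices (equivalently, along the maximal horizontal segment through the reflex vertex's ``long'' side) guards the whole piece. The only substantive case is $2$ reflex vertices, i.e. $n=8$.

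For the two-reflex-vertex case I would enumerate the possible shapes up to rotation and reflection. With two reflex vertices an orthogonal octagon is one of a small number of shapes: the ``staircase'' (both reflex vertices turn the same way, giving a Z/S-shape), the ``T''/plus-like shapes, the ``U''-shape, and a couple of others obtained by where the two reflex vertices sit relative to each other. In each shape I would exhibit one horizontal or vertical line segment $\gamma$ inside $P$ such that every point of $P$ has its perpendicular foot onto $\gamma$ inside $P$. The cleanest way to verify this is via Lemma \ref{lem:hitting}: it suffices to find a single guard-segment hitting all crosses, i.e. a maximal segment that intersects, for each cross, at least one of its two supporting slice-segments. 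Since $P$ has only two reflex vertices, the pixelation has very few pixels (at most a constant number), so checking that a chosen $\gamma$ hits them all is a finite, small verification in each case.

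Concretely, for a staircase/Z-shape the ``middle'' maximal segment spanning the central rectangle (oriented so it crosses the step) sees everything; for a U-shape, a camera running along the base of the U sees both arms and the base; for the T/plus-type shapes, the camera along the ``crossbar'' works. The key structural fact making this go through is that with at most two reflex vertices, every pixel shares a full slice with the pixel(s) adjacent to a suitably chosen central slice, so one maximal segment placed on that central slice hits every cross.

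The main obstacle is simply making the case enumeration exhaustive and not missing a configuration (in particular the ones where the two reflex vertices face ``away'' from each other, which superficially looks like it might need two cameras but does not, because the perpendicular-visibility region of a well-placed long camera still reaches into both notches). To keep the proof short I would present a figure with the handful of octagon shapes and indicate the guard segment in each, rather than writing out coordinates; the verification in each picture is immediate once the segment is drawn.
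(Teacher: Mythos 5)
Your overall strategy (reduce to the number of reflex vertices, dispose of $0$ and $1$ reflex vertices via the rectangle and $L$-shape, then handle $n=8$) matches the paper's starting point, but for the substantive $n=8$ case you take a genuinely different route: an exhaustive enumeration of octagon shapes, whereas the paper gives a uniform two-case argument. The paper rotates so that the two reflex vertices have distinct $x$-coordinates, takes the vertical decomposition into three slices, and looks at the middle slice $\sigma$: either some horizontal segment through $\sigma$ reaches the interior of $P$ on both its left and right sides, in which case its maximal extension is a horizontal camera seeing all three slices, or no such segment exists, in which case both outer slices have $y$-range contained in that of $\sigma$ and a vertical camera inside $\sigma$ guards everything. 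This dichotomy handles the $T$, $U$, and $Z/S$ shapes simultaneously and, importantly, makes explicit \emph{when} the camera must be vertical rather than horizontal --- precisely the ``reflex vertices facing away'' configuration you flag as the worrying one, where a single horizontal camera in fact does \emph{not} suffice and your phrase ``oriented so it crosses the step'' is doing unexamined work. Your approach would buy concreteness and a picture-checkable proof, but as written it has a real gap that you yourself identify: the enumeration is never completed (``a couple of others'' is not a case list), and the ``key structural fact'' about pixels sharing a full slice with a central slice is asserted rather than proved. To close your version you would need to verify that, up to symmetry, the only two-reflex-vertex octagons are $T$, $U$, and $Z/S$, and then justify the choice of orientation in the $Z/S$ case; at that point you would essentially have rediscovered the paper's middle-slice dichotomy, which avoids the enumeration altogether.
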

\begin{proof}
We already showed (in the proof of Theorem~\ref{thm:onlyHSCTheorem}) that an orthogonal polygon with 6 vertices can be guarded with a  horizontal camera, so assume that $P$ has exactly 8 vertices. Since $P$ is orthogonal, it has exactly two reflex corners.  After possible rotation, we may assume that these two reflex corners do not have the same $x$-coordinate.  Take the vertical decomposition of $P$ (which by the above creates three slices), and consider the middle slice $\sigma$. This slice has exactly one reflex vertex on each of its left and right side, and therefore is incident to the segmentation lines extended from them. So some part (perhaps all) of the left and right side of $\sigma$ is in the interior of $P$.

If there exists some horizontal line segment $\ell$ within $\sigma$ such that on both the left and right side of $\sigma$ it hits the interior of $P$, then  we are done. Namely, extend $\ell$ into a maximal line segment inside $P$;  this gives a horizontal camera that covers all three slices (see Figure~\ref{fig:threeConsecutiveRs}(a)). If there is no such line segment, then on both left and right side of $\sigma$ some part must belong to the boundary of $P$.  But then the slice left  of $\sigma$ has a smaller $y$-range than $\sigma$, as does the slice right of $\sigma$.  Hence a vertical camera inside $\sigma$ guards all of $P$, see Figure~\ref{fig:threeConsecutiveRs}(b). 
\end{proof}

\begin{figure}[ht]
\centering
\hspace*{\fill}
\includegraphics[page=1,height=.20\textwidth]{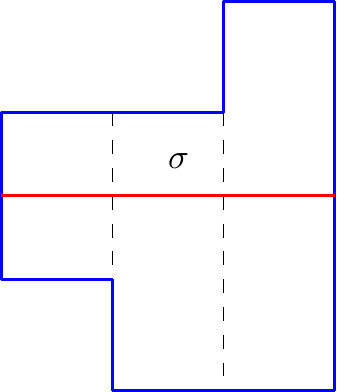}
\hspace*{\fill}
\includegraphics[page=2,height=.20\textwidth]{8sided.pdf}
\hspace*{\fill}
\includegraphics[width=.25\textwidth]{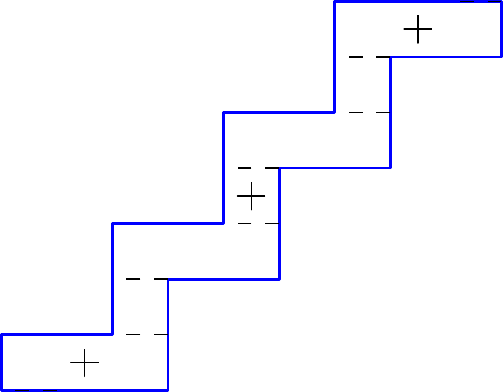}
\hspace*{\fill}
\caption{(Left and middle) Guarding an 8-sided polygon.  (Right)
A polygon with a path-segmentation that requires $(n+2)/6$ cameras.}
\label{fig:threeConsecutiveRs}%
\label{fig:lowerBoundPath}%
\end{figure}

\begin{theorem}
\label{thm:pathPolygons}
Given an orthogonal polygon $P$ with $n$ vertices that has a path segmentation, $\lfloor (n+2)/6\rfloor$ sliding cameras are always sufficient and sometimes necessary to guard $P$ entirely.
\end{theorem}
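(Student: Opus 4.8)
For the lower bound, the polygon on the right of Figure~\ref{fig:lowerBoundPath} has a path-segmentation and contains $(n+2)/6$ marked crosses, no two of which can be seen by a single sliding camera; hence it needs $\lfloor(n+2)/6\rfloor$ cameras.

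For the upper bound the plan is to induct on $n$. After a rotation, assume the \emph{vertical} segmentation of $P$ has a dual graph that is a path $\sigma_1-\sigma_2-\cdots-\sigma_m$, so that the $\sigma_i$ are rectangles with consecutive $x$-ranges and, writing $Y_i$ for the $y$-range of $\sigma_i$, we have $Y_i\cap Y_{i+1}\neq\emptyset$ for all $i$. The base case $n\le 8$ is Lemma~\ref{lem:8vertices} (and $\lfloor(n+2)/6\rfloor\ge1$ because $n\ge4$), and the cases $m\le 2$ are trivial. For $m\ge3$ I would first prove a \emph{coverage lemma}: a prefix $\sigma_1\cup\cdots\cup\sigma_j$ is guardable by one sliding camera if and only if either $\bigcap_{i=1}^{j} Y_i\neq\emptyset$ (place a maximal horizontal camera at a common height) or there is an index $t\le j$ with $Y_1\subseteq\cdots\subseteq Y_t\supseteq\cdots\supseteq Y_j$ (place a maximal vertical camera strictly inside $\sigma_t$; a parallel shift lets us assume any vertical camera lies strictly inside a slice). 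Let $j\ge2$ be the largest index for which this condition holds. If $j=m$ we are done with one camera; otherwise set $Q=\sigma_1\cup\cdots\cup\sigma_j$, guard it with one camera, and recurse on $P'=\sigma_{j+1}\cup\cdots\cup\sigma_m$, which is again a simple orthogonal polygon whose vertical segmentation is the suffix path $\sigma_{j+1}-\cdots-\sigma_m$.

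The real work is the vertex count. When $P$ is split along the segmentation chord $c$ between $\sigma_j$ and $\sigma_{j+1}$, an endpoint of $c$ that is a reflex vertex of $P$ is inherited by exactly one of $Q,P'$ as a convex vertex, whereas an endpoint of $c$ lying in the relative interior of an edge of $P$ becomes a new convex vertex of \emph{both} pieces; the two endpoints of $c$ cannot both be of the latter type, as that would force $Y_j=Y_{j+1}$. Hence $|P'|=n+2\delta-|Q|$ with $\delta\in\{0,1\}$, and the recursion $\lfloor(n+2)/6\rfloor=1+\lfloor((n-6)+2)/6\rfloor$ closes provided $|Q|\ge 6+2\delta$. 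I would establish this by a short case analysis. If $j\ge3$, then $Q$ contains at least $j-1\ge2$ of the segmentation cuts of $P$, hence has at least two reflex vertices, hence $|Q|\ge8\ge6+2\delta$. If $j=2$, then $Q$ is an $L$-shape (six vertices) when $Y_1$ and $Y_2$ share an endpoint and an eight-vertex ``staircase'' otherwise; moreover $\delta=1$ precisely when $Y_2$ and $Y_3$ share an endpoint, and whenever both pairs share an endpoint a comparison of $Y_1,Y_2,Y_3$ shows that one of the two conditions of the coverage lemma already holds for $\sigma_1\cup\sigma_2\cup\sigma_3$, forcing $j\ge3$ --- a contradiction. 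So $\delta=1$ forces $|Q|=8$, while $\delta=0$ only needs $|Q|\ge6$; either way $|Q|\ge6+2\delta$ and the induction goes through. I expect the main obstacles to be pinning down this case analysis precisely --- in particular the degenerate configurations where some $Y_i\cap Y_{i+1}$ is a single point, and the bookkeeping of which chord endpoints are genuine vertices of $P$ --- together with the (intuitively clear but still needed) verification that deleting a prefix of the slice-path leaves a simple polygon with exactly the complementary path-segmentation.
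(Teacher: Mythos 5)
Your proposal is correct in substance and follows the same high-level skeleton as the paper (identical lower bound; upper bound by induction that peels a prefix of the slice-path guardable by one camera and shows the remainder drops to at most $n-6$ vertices), but the mechanism is genuinely different in the part that carries the weight. The paper removes a prefix of \emph{fixed} size --- the first two slices if the first segmentation chord joins two reflex vertices, otherwise the first three --- argues that this piece has at most $8$ vertices so that Lemma~\ref{lem:8vertices} guards it with one camera, and asserts by ``elementary computation'' that the remainder loses at least three reflex vertices. You instead remove the \emph{maximal} one-camera-guardable prefix, certified by an explicit coverage lemma (a common height in $\bigcap Y_i$ for a horizontal camera, or a nested chain $Y_1\subseteq\cdots\subseteq Y_t\supseteq\cdots\supseteq Y_j$ for a vertical one), and then prove the complementary inequality $|Q|\ge 6+2\delta$ by a case analysis; notably, only the ``if'' direction of your coverage lemma is actually needed, since maximality of $j$ is invoked only to show that the bad configuration ($j=2$, $Q$ an $L$-shape, $\delta=1$) would force the three-slice prefix to satisfy the coverage condition. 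What each approach buys: the paper's is shorter and reuses Lemma~\ref{lem:8vertices} for the peeled piece, but hides the bookkeeping (and its ``take three slices'' branch can in fact produce a piece with three reflex vertices, i.e.\ $10$ corners, so the reader must supply a further argument there); your greedy choice makes the accounting transparent and self-correcting --- maximality is used exactly where the count would otherwise fail --- at the price of the coverage lemma, the $j$-versus-$\delta$ case analysis, and the degenerate configurations (single-point overlaps $Y_i\cap Y_{i+1}$, chord endpoints coinciding with vertices) that you rightly flag as the remaining work.
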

\begin{proof}
For the lower bound, consider the polygon in
Figure~\ref{fig:lowerBoundPath}(b) (which can be generalized in a natural
way).  We have indicated $3= \lfloor (n+2)/6 \rfloor$ black crosses
with the property that no camera can see two of them.  Hence at least this
many cameras are required. 

For the upper bound, a simple inductive argument suffices.  Clearly the
bound holds for $n=4,6,8$, since we can guard $P$ with one camera, so
assume $n>8$.  We aim to extract a sub-polygon $P'$ of $P$ such that $P'$
has at most 8 vertices and the rest $P-P'$ has again a path segmentation and 
has at most $n-6$ vertices; the result then follows by induction and
Lemma~\ref{lem:8vertices}.  We find $P'$ by removing the first 2 or 3 slices
of the segmentation whose dual is a path (where ``first'' refers to ``at
an end of the path'').   Namely, if the first segmentation line connects
two reflex vertices, then set $P'$ to be the first two slices, else set it
to be the first three slices.  Elementary computation shows that $P'$ has
at most 8 corners, and that $P-P'$ has at least 3 fewer reflex vertices
than $P$, hence $P-P'$ has at most $n-6$ vertices as desired.
\end{proof}

\section{Conclusion}
\label{sec:conclusion}
In this paper, we studied the problem of guarding an orthogonal polygon
with the minimum number of sliding cameras.  We gave the first constant-factor
approximation algorithm for this problem, which works even if the polygon
has holes.  We also showed how to solve the problem optimally if the
polygon is thin and has no holes, and we gave art-gallery-type results
bounding the number of sliding cameras that are always sufficient
and sometimes required.

The most interesting remaining question is whether guarding an orthogonal
polygon with sliding guards is polynomial if the polygon has no holes.
Also, the factor in our $O(1)$-approximation algorithm (which we did not
compute since it is hidden in the machinery of 
\cite{BronnimannG95,clarksonV2007}) is likely large.  Can it be improved?
Even better, could we find a PTAS or is the problem APX-hard?
	
\bibliographystyle{plain}
\bibliography{ref}

\newpage

\end{document}